\documentclass[11pt]{article}
\usepackage[margin=1in]{geometry}
\usepackage{mdframed}
\usepackage{times} %
\usepackage{amsmath,amssymb,mathtools,bm,amsthm}
\usepackage{graphicx}
\usepackage{xcolor}
\usepackage{bbm}
\usepackage[backref=page,breaklinks=true,hidelinks]{hyperref}
\hypersetup{
    colorlinks,
    linkcolor={red!50!black},
    citecolor={blue!50!black},
    urlcolor={blue!80!black}
}
\renewcommand*{\backref}[1]{}
\renewcommand*{\backrefalt}[4]{%
    \ifcase #1 {\footnotesize(Not cited.)}%
    \or        {\footnotesize(Cited on page~#2.)}%
    \else      {\footnotesize(Cited on pages~#2.)}%
    \fi}
    
\usepackage{natbib}
\usepackage{enumitem}
\usepackage{booktabs}
\usepackage{subcaption}
\usepackage[ruled]{algorithm2e}
\SetKwInput{KwInput}{Input}
\SetKwInput{KwOutput}{Output}

\SetAlFnt{\small}
\SetAlCapFnt{\small}
\SetAlCapNameFnt{\small}
\SetAlCapHSkip{0pt}
\IncMargin{-\parindent}

\RequirePackage{bm} 
{%
\begin{proof}[#1]  %
}%
{%
\end{proof}  %
}

\newtheorem{theorem}{Theorem} 

\newtheorem{lemma}{Lemma}
\newtheorem{proposition}{Proposition}

\newtheorem{remark}{Remark}

\newtheorem{assumption}{Assumption}[section]

\newtheorem{example}{Example}

\renewcommand{\hat}{\widehat}
\renewcommand{\tilde}{\widetilde}

\def\min{\qopname\relax n{min}}
\def\max{\qopname\relax n{max}}

\def\argmin{\qopname\relax n{argmin}}
\def\argmax{\qopname\relax n{argmax}}

\def\supp{\qopname\relax n{\mathbf{supp}}}

\newenvironment{lp*}{\begin{equation*}  \begin{array}{lll}}{\end{array}\end{equation*}}

\usepackage[colorinlistoftodos]{todonotes}
\presetkeys{todonotes}{inline}{}
\presetkeys{todonotes}{color=orange!10}{}

\renewcommand{\P}{\mathbb{P}}
\newcommand{\R}{\mathbb{R}}
\newcommand{\Var}{\operatorname{Var}}

\newcommand{\E}{\mathbb{E}}
\newcommand{\Ind}{\mathbbm{1}}
\newcommand{\de}{\textnormal{d}}

\makeatletter
\long\def\blfootnote#1{%
  \begingroup
    \renewcommand\thefootnote{}%
    \renewcommand\@makefntext[1]{\noindent ##1}%
    \footnotetext{#1}%
  \endgroup}
\makeatother

\title{Algorithmic Feature Highlighting \\
for Human--AI Decision-Making
}
\author{Yifan Guo \qquad Jann Spiess   \\ { Stanford University}}

\date{April 23, 2026}

\begin{document}

\maketitle

\begin{abstract}
Human decision-makers often face choices about complex cases with many potentially relevant features, but limited bandwidth to inspect and integrate all available information. In such settings, we study algorithms that \emph{highlight} a small subset of case-specific features for human consideration, rather than producing a single prediction or recommendation. We model highlighting as a constrained information policy that selects a small number of features to reveal. A central issue is how humans interpret the algorithm's choice of features: a \emph{sophisticated} agent correctly conditions on the selection rule, while a \emph{naive} agent updates only on revealed feature values and treats the selection event as exogenous. We show that optimizing highlighting for sophisticated agents can be computationally intractable, even in simple discrete and binary settings, whereas optimizing for naive agents is tractable as long as the maximal bandwidth is fixed. We also show that a highlighting policy that is optimal for sophisticated agents can perform arbitrarily poorly when deployed to naive agents, motivating robust, implementable alternatives. We illustrate our framework in a calibrated empirical exercise based on the American Housing Survey. Overall, our results establish the value of highlighting a context-specific set of features rather than a fixed one as a practically appealing and computationally feasible tool for achieving human--algorithm complementarity.
\end{abstract}

\blfootnote{
    Yifan Guo (\href{mailto:yifanguo@stanford.edu}{yifanguo@stanford.edu}) and Jann Spiess (\href{mailto:jspiess@stanford.edu}{jspiess@stanford.edu}), Graduate School of Business, Stanford University.
We thank Haozhan Gao and three anonymous referees for helpful comments.
}

\section{Introduction}

In high-stakes decision-making, algorithms frequently assist rather than replace human decision-makers.
For example, algorithms may screen resumes, but final hiring decisions are still made by managers.
In such settings, having algorithms recommend a final action may not be sufficient for decision-makers to make high-quality decisions that reflect their private information and preferences.
Instead, effective algorithmic support tools should surface interpretable and verifiable details that are most relevant to the final decision.

In this article, we propose a model of joint human--algorithm decision-making in which the algorithm \emph{highlights} key features that are relevant to the downstream decision of the human decision-maker. We model highlighting as a constrained information policy that selects which of a large number of features to reveal. We show that simple and computationally tractable algorithms are able to improve markedly over policies that instead always reveal the same features, and that this gain is robust to different ways in which decision-makers may interpret the selected features. Overall, we argue that highlighting presents a practically appealing and computationally feasible tool for achieving human--algorithm complementarity that raises rich theoretical questions and warrants further investigation.

We model human--algorithm interaction as a principal--agent game between an algorithm designer (principal) who decides which features to highlight and a human decision-maker (agent) who observes the highlighted features and makes a decision.
The joint goal of principal and agent is to choose an action for a decision instance that is characterized by a potentially high-dimensional feature vector.
The algorithm designer specifies an algorithm that observes the full feature vector and selects a subset of its components to report to a bandwidth-constrained decision-maker.
The human decision-maker observes only those features highlighted by the algorithm.
After receiving the highlighted features, the agent infers missing information and then chooses an action.
For example, when a candidate applies for a job, the algorithm may observe all information contained in the application and then strategically highlight a subset of features for the human HR manager, who makes the final decision.
Throughout, we assume that preferences between the algorithm designer and the decision-maker are aligned.

We believe that our focus on \emph{highlighting} features rather than providing predictions or summaries has two important advantages in practice:
First, presenting more than a simple prediction or recommended action allows the human decision-maker to learn richer information that provides important context and can be combined with private information.
Second, the constraint that the algorithm can only pick existing features means that the provided information is interpretable, trustworthy, and verifiable.
For example, simply providing a summary generated by large-language models (LLMs) could be problematic since we cannot generally guarantee that it contains only factual information about a given case.
By contrast, even if we use an LLM to decide \emph{which} features to highlight, the resulting information remains correct by construction, even without strong guarantees about the underlying algorithm.%
\footnote{This would still leave space for deception by a misaligned principal since the choice of \emph{which} features are selected could be driven by a different objective \citep[akin to Bayesian persuasion,][]{kamenica2011bayesian}, but the deception would then mainly be driven by failing to highlight the most relevant information rather than misrepresenting feature values.}

We consider different modes of human decision-making to highlight potential limitations in practice. 
We focus on two benchmark models of human decision-makers: a sophisticated agent and a naive agent.
Both agents observe the features highlighted by the algorithm and update their posteriors accordingly.
The \emph{sophisticated} agent also takes into account the information contained in the algorithm's decision about which features to highlight, in addition to the value of the highlighted features.
By contrast, the \emph{naive} agent ignores why the algorithm highlights specific features and processes only the highlighted features at face value.
For example, a hiring manager could be sophisticated if they respond not just to the information contained in a summary, but also to what is left out. Or they could be naive if they simply fill in the blanks based on their best guess, without taking into account why some information may be included or left out.
In practice, a real decision-maker may behave as a mixture of these two types, exhibiting both sophisticated and naive behaviors.

We show that these different modes of updating have vastly different implications for optimal highlighting policies, their robustness, and their computation.
We first document gaps between the optimal policies for naive and sophisticated agents, and the cost of optimizing for one when the audience is the other.
While the sophisticated agent always does at least as well as the naive agent, we show in simple examples that deploying a policy that targets the naive agent may still lead to an arbitrarily large gap from the \emph{optimal} policy for a sophisticated audience (``price of simplicity''). 
Conversely, we demonstrate that a highlighting policy optimized for a sophisticated agent can have arbitrarily bad performance for a naive agent (``price of complexity'') relative to the latter's optimal policy.
Beyond the potential cost to naive agents, we illustrate that optimal sophisticated solutions tend to be overly complex.

Optimal solutions that optimize for a sophisticated decision-maker not only lack robustness; they are also often infeasible to compute.
Since the sophisticated agent interprets not only the revealed features, but also takes into account \emph{why} they were revealed, the principal's optimization problem should account for this additional information channel.
As with hardness results in algorithmic Bayesian persuasion, we show that the optimal highlighting problem for a sophisticated agent is NP-hard.
Therefore, aiming for first-best highlighting schemes for sophisticated decision-makers may lead to overly complex and unintuitive solutions that are both computationally intractable and lack robustness against suboptimal agent decisions.

In contrast to the sophisticated case, even first-best optimal solutions for naive updating can generally be computed in polynomial time. Beyond such first-best solutions, we also consider even simpler greedy solutions that select features based on the incremental information they provide. Focusing on the case of independent binary features, we show that this greedy solution yields first-best outcomes for the naive agent and provides additional gains for the sophisticated agent. Specifically, we provide an analytical characterization of the asymptotic average loss for both types of agents, which we extend to the case where there may be limited correlation across features.

We show that the simple algorithms we consider can substantially outperform a benchmark in which a fixed set of features is highlighted. 
By highlighting ex-post surprises, we can generally obtain considerably smaller losses across both sophisticated and naive updaters.
We show that these algorithms have some surprising properties.
For example, in our setting, it is not generally optimal to always highlight the maximal allowable number of features, since additional information does not always lead to better downstream decisions.
Instead, optimal algorithms may withhold incremental information if that information misleads more than it helps.

We illustrate these findings in a calibrated numerical exercise using data from the American Housing Survey. Using data from around 1,500 homes from the Minneapolis--St.~Paul metro area and a Gaussian belief model over 44 housing characteristics, we show that contextual highlighting policies improve substantially over fixed policies: for example, when highlighting up to $k=10$ features, a contextual greedy algorithm cuts loss by more than half. The exercise also confirms the non-monotonicity result in practice: a contextual greedy policy with ten features outperforms even full revelation of all 43 non-price features.
Overall, the exercise demonstrates how a relatively simple, computationally efficient greedy algorithm can produce highly informative posteriors even when updating is naive.

The assumptions of perfect alignment, joint priors, and (partially) rational updating represent substantive limitations of our main model, which suggest important further work.
We therefore also discuss extensions to more realistic updating rules and additional frictions between principal and agent.
Beyond sophisticated and naive agents, we consider specific behavioral biases in which agents update their beliefs based on stereotypes.
We also discuss private information of the agent, wrong beliefs, and misaligned preferences.
In addition to these extensions, we discuss more complex feature distributions.
With Normally distributed features, we show that sophisticated solutions take unintuitive and complex forms, reinforcing our prescription to optimize for naive updating.

The remaining article is structured as follows.
\autoref{sec:literature} reviews the related literature in human--AI interaction, information design, and feature revelation.
\autoref{sec:model} introduces the model of algorithmic feature highlighting with naive and sophisticated updating.
\autoref{sec:updating} characterizes the cost of misoptimization when the designer targets the wrong updating type.
\autoref{sec:complexity} studies computational complexity, establishing hardness for sophisticated updating and tractability for naive updating, and analyzes greedy algorithms.
\autoref{sec:binary} provides closed-form optimal policies and asymptotic loss characterizations for independent binary features.
\autoref{sec:empirical} presents our simulation study based on the American Housing Survey.
\autoref{sec:extensions} discusses extensions, and \autoref{sec:conclu} concludes.

\section{Related Literature}
\label{sec:literature}

We contribute to an interdisciplinary literature that aims to model and develop solutions for human--algorithm interaction. We also build upon approaches from information design and the literature on subset selection under constraints.

\subsection{Human--AI Decision Making and Behavioral Foundations}

We relate to a growing literature that studies how humans and algorithms jointly make decisions.
\citet{noti2025ai} formalizes AI-assisted decision making with a learning human and shows that the optimal feature-revelation schedule balances short-run accuracy against long-run human learning, yielding a tractable combinatorial characterization.
\citet{greenwood2025designing} analyzes instance-contingent algorithmic delegation; it shows that the optimal delegate can be strictly better than the best standalone AI and provides both hardness results and efficient algorithms.
\citet{tan2026should} studies the complementary question of \emph{when} to dispatch a human reviewer under a budget on human effort, deriving a closed-form optimal rule and applying it to AI-assisted peer review.
\citet{athey2020allocation} studies the allocation of decision authority between humans and AI.
\citet{boyaci2024human} studies how machine input affects accuracy and effort, while \citet{green2019principles} articulates practical principles for algorithm-in-the-loop decision making.
\citet{balakrishnan2025human} shows that feature transparency reduces naive advice-weighting bias in human--algorithm collaboration with private information.
More broadly, work on designing human--AI collaboration under behavioral responses \citep[e.g.,][]{agarwal2025designing,mclaughlin2022algorithmic} studies how optimal disclosure, recommendation, and automation designs change when humans exhibit biased responses.
\citet{bastani2026improving} studies sequential decision-making and proposes an RL-based approach to produce simple, interpretable ``tips'' targeted to bottlenecks in the human policy.
\citet{mclaughlin2024designing,hoong2025improving} provide frameworks for algorithmic recommendations that allow for suboptimal responses.
Our framework differs in that the AI never acts or makes recommendations directly; instead, it strategically selects which features to surface,
and we characterize how performance depends on whether the human anticipates (or neglects) the policy dependence of revealed information.

A closely related strand uses decision theory to characterize the value of information that one agent can contribute to another's decision.
\citet{guo2025value} formalizes human--AI complementarity through the expected improvement in best-attainable payoff from an additional signal, and \citet{guo2026complllm} operationalizes this idea by fine-tuning an LLM to extract signals from a supervisor's context that complement an existing agent's recommendation.
These works take the space of signals as exogenously given and assume that the receiver best-responds to the induced posterior.
By contrast, our signals arise endogenously as a subset of the underlying feature vector, which ensures that the revealed content is interpretable and verifiable by construction.
\citet{vaccaro2024when} conducts a meta-analysis of 106 studies, finding that human--AI combinations on average perform worse than the best of humans or AI alone.
\citet{angelova2023algorithmic} documents that many judges underperform when overriding algorithmic recommendations.
 \citet{alur2024human} proposes a framework for incorporating human expertise by distinguishing algorithmically indistinguishable inputs.
\citet{donahue2022human} provides theoretical conditions under which complementarity is achievable or impossible.

\subsection{Algorithmic Information Design and Persuasion}

Our revelation policy can be viewed as a signaling scheme, directly connecting our setting to the literature on information design and Bayesian persuasion.
The foundational work of \citet{kamenica2011bayesian} introduced Bayesian persuasion, formalizing how a sender commits to an information structure to influence a receiver's action; a closely related precursor, \citet{rayo2010optimal}, characterizes optimal disclosure by a sender who commits ex ante to a signal structure and shows that optimal disclosure generically pools states into coarse partitions.
This framework has since been extended across economics and operations research: \citet{candogan2020information} surveys applications in operations, and \citet{dughmi2016algorithmic} develops algorithmic tools for computing persuasion schemes with complexity guarantees.

We adopt this information-design perspective but introduce behavioral heterogeneity by modeling the receiver as either naive (ignoring the selection event) or sophisticated (conditioning on the revelation rule), and we derive structural and computational results for optimal revelation under both types.
Relatedly, \citet{yu2023encoding} combines information design with behavioral modeling via deep learning, demonstrating how to incorporate empirically estimated human responses into signaling schemes.
In contrast, we focus on an analytically tractable model of belief formation (naive vs.\ sophisticated) and provide explicit guarantees.
The closest work to ours is \citet{hoefer2024algorithmic}, which studies algorithmic persuasion with constraints on the design space imposed by verifiable evidence.
Their model allows the sender to choose among evidentiary messages subject to a feasibility constraint, whereas in our setting, the highlighted information must correspond to a subset of the state's features.
Related work on privately informed receivers, including \citet{guo2019interval}, characterizes the sender-optimal disclosure policy as an interval structure, providing a tight structural benchmark for persuasion when the receiver holds private information.

\subsection{Subset Selection and Instance-Wise Feature Revelation}
\label{sec:subsets}

Our model is closely related to the broad literature on subset selection under a cardinality constraint.
In our setting, the most direct benchmark is fixed subset selection: the algorithm commits ex ante to a single set of features to highlight and uses the same set for every instance.
This benchmark parallels classical feature selection and sparse approximation problems, where the goal is to select a small number of coordinates that best predict an outcome or reconstruct the remaining coordinates.
Such problems are often computationally challenging in the worst case, motivating approximation algorithms and structural conditions under which greedy methods are effective.
In the context of selecting predictive variables, \citet{das2011submodular} analyzes greedy subset selection for regression-type objectives via spectral conditions and the submodularity ratio, providing guarantees even when exact submodularity does not hold.
Relatedly, observation and feature-acquisition objectives in probabilistic models are often approximately submodular, yielding principled greedy policies \citep{krause2007near}.
However, our setting focuses on instance-wise feature selection, which can strictly outperform the fixed-subset benchmark.

\section{A Model of Algorithmic Feature Highlighting}\label{sec:model}

We model the interaction between an algorithm that highlights features and a bandwidth-constrained human decision-maker who updates their belief and takes actions based on the highlighted features.
We assume that a decision instance is given by a random vector $X = (X_i)_{i \in M} \sim \P$ with a finite index set $M$ of size $|M| = d$.%
\footnote{In \autoref{sec:extensions} we discuss extensions to human private information, in which case features are given by $X = (X_M,X_H)$ with machine features $X_M$ and human features $X_H$.}

\paragraph{Highlighting policy.}
We model feature \emph{highlighting} as follows. 
We assume that the features $X = (X_i)_{i \in M} \in \mathcal{X}$ are accessible to an algorithm
that selects (or ``highlights'') a subset $I \subseteq M$ with $|I| \leq k$ and reveals the set $I$ as well as the associated features $X_I = (X_i)_{i \in I}$ to the human decision-maker. Here, $k$ is an exogenously given bandwidth constraint that limits the number of features that can be highlighted.
A (deterministic) highlighting policy is a map
$$
\sigma: \mathcal{X} \to \{I \subseteq M: |I|\le k\}.
$$
Randomized policies can be modeled similarly; deterministic policies suffice for our main hardness and separation results.
We assume that this highlighting policy is chosen by an \emph{algorithm designer}, who acts as the principal of our game.

Here, the highlighting policy can be interpreted as follows. 
In hiring, a company may use an algorithm to assist HR in the recruiting process; the algorithm highlights several key features in each applicant's resume. 
In a hospital, the algorithm may highlight key components of a patient's medical file to assist doctors in their final assessment. 
In school admissions, the algorithm may highlight key aspects of each candidate's background for the admissions committee.

\paragraph{Human belief formation.}
Upon observing $(I, X_I)$, the human decision-maker forms a posterior belief over the unknown components of $X$.
We assume that the human decision-maker knows the distribution $\P$ (which acts as their prior), and consider two main ways in which the decision-maker updates to form a posterior:

\begin{itemize}
    \item \emph{Sophisticated updating.}
    The sophisticated agent is aware of the prior $\P$ and the policy $\sigma$, and hence conditions on the selection event.
    We model this by the distribution
    $$
    \hat{\P}^{S}(\cdot \mid I, X_I)
    =
    \P\bigl(\cdot \mid X_I, \sigma(X)=I\bigr).
    $$
    Equivalently, $\hat{\P}^{S}$ is the Bayesian posterior under the joint distribution induced by $(\P,\sigma)$.

    \item \emph{Naive updating.}
    The naive agent updates on the revealed values $X_I$, but ignores the informational content of \emph{which} set $I$ was selected.
    We model this by the distribution
    $$
    \hat{\P}^{N}(\cdot \mid I, X_I)
    =
    \P(\cdot \mid X_I),
    $$
    i.e., the naive agent treats the selection of $I$ as exogenous. Note that this update does not require knowledge of $\sigma$.
\end{itemize}

Here, we assume for simplicity that updating is based on the \emph{true} distribution $\P$. However, our model directly extends to the case where updating is performed according to some prior $\tilde{\P}$ that may or may not be equal to $\P$. In that case, we can still write $\hat{\P}^S$ and $\hat{\P}^N$ for the corresponding posteriors.

\paragraph{Decision and loss.}
After forming a belief $\hat{\P}(\cdot \mid I, X_I)$, the human chooses an action $a \in A$ to minimize expected loss.
Given a loss function $\ell: A \times \mathcal{X} \to \R$, the human plays a Bayes action
$$
\hat a(\ell, \hat{\P}) \in \argmin_{a\in A} \E_{\hat{\P}}[\ell(a,X)].
$$
The loss at instance $X$ is then
$
L(\hat{\P}, X) \;=\; \ell\!\bigl(\hat a(\ell,\hat{\P}), X\bigr).
$

\paragraph{Design objective.}
We focus on aligned objectives: the algorithm designer seeks to minimize expected loss according to the same loss function.
For agent type $\tau\in\{S,N\}$ and policy $\sigma$, define the expected loss (risk) as
$$
R^\tau(\sigma)
~:=~
\E\Bigl[
L\bigl(\hat{\P}^{\tau}(\cdot \mid I, X_I),\, X\bigr)
\Bigr],
\qquad
I=\sigma(X).
$$
The design problem is
$$
\min_{\sigma:\,|\,\sigma(x)\,|\le k} \; R^\tau(\sigma).
$$
If the designer faces uncertainty over human types, a natural Bayesian objective is a weighted average
$\lambda R^S(\sigma)+(1-\lambda)R^N(\sigma)$ for some population share $\lambda\in[0,1]$.
A robust alternative is to minimize the worst-case loss across types, $\min_{\sigma:\,|\,\sigma(x)\,|\le k} \max\{R^S(\sigma),R^N(\sigma)\}$.
In settings where sophisticated updating weakly improves performance relative to naive updating for every policy, this minimax problem reduces to minimizing naive risk.

\paragraph{Special case: recovery.}
A common special case is recovery of a target quantity $y=y(x)\in\R^m$.
Let $A=\R^m$ and define squared loss
$$
\ell(a,x) = \|a-y(x)\|_2^2.
$$
Then $\hat a(\ell,\hat{\P})=\E_{\hat{\P}}[y(X)]$ is the posterior mean.

\paragraph{Robust microfoundation for squared loss.}
The squared loss above admits a robust interpretation when the downstream decision problem is not known to the algorithm designer.
Suppose the human's action $a\in\R^m$ feeds into a downstream task whose payoff depends on an unknown linear combination $\beta^\top a$ of the reported action, evaluated against the target $\beta^\top y(x)$ under quadratic loss
$(\beta^\top(a-y(x)))^2$.
If the coefficient $\beta\in\R^m$ is unknown and may be chosen adversarially from the unit ball $\{\beta:\|\beta\|_2\le 1\}$, the worst-case loss reduces to
$$
\sup_{\|\beta\|_2\le 1}\;\bigl(\beta^\top(a-y(x))\bigr)^2 \;=\; \|a-y(x)\|_2^2,
$$
which is exactly the squared loss above.

\paragraph{Special case: outcome-targeted recovery.}
In our simulations, we consider a specific loss function that combines full recovery of the feature vector with prediction loss for a specific target.
In this case, $y(x) = \begin{psmallmatrix} x \\ \E[Y \mid X=x] \end{psmallmatrix}$ for some target outcome $Y$, and
$$
\ell(a, x) = (1-\alpha) \, \|\hat{x} - x\|_2^2 + \alpha \, (\hat{y} - \E[Y \mid X=x])^2
$$
for $a = \begin{psmallmatrix} \hat{x} \\ \hat{y} \end{psmallmatrix}$.
Full recovery ($\alpha = 0$) as well as simple prediction ($\alpha = 1$) are special cases.

\section{Naive vs.\ Sophisticated Updating and the Cost of Misoptimization}\label{sec:updating}

Above, we have introduced a model of feature highlighting in which an algorithm sends a subset of features to a human decision-maker.
There, we have considered two different ways in which the human decision-maker may update their belief: either by taking the selection rule into account (sophisticated), or by ignoring selection (naive). We now show that these two ways of updating lead to different solutions, and that optimizing for one type of agent comes at a price when the policy is deployed for the other one.

Let $\sigma^N \in \argmin_\sigma R^N(\sigma)$ and $\sigma^S \in \argmin_\sigma R^S(\sigma)$ denote optimal policies for naive and sophisticated agents, respectively.
To quantify the mismatch when a policy optimized for one agent type is deployed to the other, define
\begin{align}\label{eq:gap-def}
&\text{price of complexity: }
&
\Delta^{N} &\;:=\; R^{N}(\sigma^{S}) - R^{N}(\sigma^{N}) \;\ge 0;
\\
&\text{price of simplicity: }
&
\Delta^{S} &\;:=\; R^{S}(\sigma^{N}) - R^{S}(\sigma^{S}) \;\ge 0.
\end{align}
The next two examples show that these gaps can be arbitrarily large in either direction: a policy that is optimal for sophisticated agents may perform extremely poorly when deployed to naive agents (large $\Delta^N$), and conversely a policy optimized for naive agents can be arbitrarily suboptimal for sophisticated agents (large $\Delta^S$).

\begin{example}[A sophisticated--optimal scheme can mislead a naive agent]\label{ex:parity}
Let $X=(X_1,X_2)$ where $X_1\in\{0,1\}$ and $X_2\in\{0,B\}$, and suppose $X$ is uniform over the four states.
Assume $B$ is odd.
The machine observes $X$ and reveals at most one coordinate ($k=1$).
Consider squared recovery loss with $y(x)=x$, i.e., $\ell(a,x)=\|a-x\|_2^2$.
Define a deterministic highlighting policy $\sigma^{S}$ by
$$
\sigma^{S}(x)=
\begin{cases}
\{1\}, & \text{if } x_1 + x_2 \text{ is odd},\\
\{2\}, & \text{otherwise}.
\end{cases}
$$
A sophisticated agent can recover $X$ with zero loss: from $(I,x_I)$ the agent infers the unrevealed coordinate using the fact that $\sigma^{S}(X)=I$.
Hence $R^S(\sigma^{S})=0$.

In contrast, a naive agent ignores the informational content of $I$.
When $I=\{1\}$, the naive posterior predicts $X_2$ by its conditional mean $\E[X_2\mid X_1]=\frac{B}{2}$, incurring squared error $(\frac{B}{2})^2=\frac{B^2}{4}$.
When $I=\{2\}$, the naive posterior predicts $\E[X_1\mid X_2]=1/2$, incurring squared error $1/4$.
Each event occurs with probability $1/2$, so
$
R^N(\sigma^{S})=\frac12\cdot\frac{B^2}{4}+\frac12\cdot\frac14=\frac{B^2}{8}+\frac18.
$
The naive--optimal policy is to always reveal $X_2$, i.e., $\sigma^N(x)\equiv\{2\}$, which yields
$
R^N(\sigma^N)=\E[(X_1-\E[X_1])^2]=1/4.
$
Therefore
$$
\Delta^N \;=\; R^N(\sigma^{S})-R^N(\sigma^{N})
=
\frac{B^2-1}{8},
$$
which can be arbitrarily bad relative to the optimal loss of $R^N(\sigma^{N}) = 1/4$.
\end{example}

\begin{example}[A naive--optimal scheme can be arbitrarily suboptimal for a sophisticated agent]\label{ex:parity2}
Let $X=(X_1,X_2)$ where $X_1\in\{0,B\}$ and $X_2\in\{0,B\}$, and suppose $X$ is uniform over the four states.
The machine observes $X$ and reveals at most one coordinate ($k=1$).
Consider squared recovery loss $\ell(a,x)=\|a-x\|_2^2$.
Define a deterministic policy $\sigma^{S}$ by
$$
\sigma^{S}(x)=
\begin{cases}
\{1\}, & \text{if } x_1=x_2,\\
\{2\}, & \text{otherwise}.
\end{cases}
$$
A sophisticated agent can recover $X$ with zero loss: if $I=\{1\}$ then $x_2=x_1$, while if $I=\{2\}$ then $x_1=B-x_2$.
Hence $R^S(\sigma^S)=0$.

Now consider a naive--optimal policy. By symmetry, a naive agent achieves the same expected loss by always revealing either coordinate.
Take $\sigma^N(x)\equiv\{1\}$.
Since the selection is constant, sophistication provides no additional inference under $\sigma^N$, so $R^S(\sigma^N)=R^N(\sigma^N)$.
Under $\sigma^N$, the posterior predicts $X_2$ by its mean $\E[X_2]=\frac{B}{2}$, yielding
$
R^S(\sigma^N)=R^N(\sigma^N)=\E[(X_2-\frac{B}{2})^2]=\frac{B^2}{4}.
$
Therefore
$$
\Delta^S
=
R^S(\sigma^N)-R^S(\sigma^S)
=
\frac{B^2}{4},
$$
which can be arbitrarily bad relative to the optimal loss of $R^S(\sigma^{S}) = 0$.
\end{example}

Together, \autoref{ex:parity} and \autoref{ex:parity2} show that optimizing a highlighting policy for one belief-formation model can lead to arbitrarily poor performance when deployed to the other model.
Furthermore, optimal sophisticated policies may take unintuitive forms that rely on complex relationships across the full feature vector.
In \autoref{sec:continuous}, we show that these findings carry over to the case where the features $X$ are continuous: even when they are Normally distributed and relatively low-dimensional, sophisticated solutions take unintuitive and hard-to-interpret forms.
These findings motivate a search for simple, robust policies that remain competitive across agent types.

\section{Computational Complexity}\label{sec:complexity}

Above, we have shown that optimal highlighting policies can differ substantively by whether they optimize for naive or sophisticated decision-makers.
We now show that this difference extends to our ability to compute optimal policies.

To show a computational contrast between sophisticated and naive receivers, we restrict our attention to discrete instance distributions.
Let $n:=|\supp(\P)|$ denote the number of possible instances in the support, $d$ the number of machine-observed features, and $k$ the bandwidth constraint.

\subsection{Hardness for Optimal Policies for Sophisticated Updating}

Even with binary features and the smallest nontrivial bandwidth $k=1$, optimizing a highlighting policy for a \emph{sophisticated} agent is NP-hard.
The following theorem formalizes our NP-hardness result by a reduction from Euclidean $2$-means clustering.

\begin{theorem}[NP-hardness of optimal highlighting for sophisticated agents]
\label{thm:nphard-sophisticated-opt}
Consider the decision-loss model with a sophisticated agent and bandwidth $k=1$.
Given an instance $(\mathbb P,A,\ell)$, computing an optimal highlighting policy
$$
\sigma^\star \in \argmin_{\sigma} R^S(\sigma)
$$
or equivalently, computing the optimal value $\min_\sigma R^S(\sigma)$ is NP-hard (in $n$ and $d$).
This remains true even when $\mathcal{X} \subseteq\{0,1\}^d$.
\end{theorem}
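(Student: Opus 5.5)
Our plan is a polynomial-time reduction from Euclidean $2$-means clustering, which is NP-hard even for points in $\{0,1\}^D$ when the dimension is part of the input. An instance consists of points $p_1,\dots,p_n\in\{0,1\}^D$ and a threshold $\theta$. The question is whether there is a partition into two clusters $C_1,C_2$ with $\sum_{j}\sum_{i\in C_j}\|p_i-\mu(C_j)\|_2^2\le\theta$.

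The key observation is that, with $k=1$ and a sophisticated agent, a policy $\sigma$ partitions the support into the cells $\{x:\sigma(x)=\{i\}\}$ (together with the empty set). The agent's posterior is the prior restricted to the cell, further refined by the revealed value $x_i$. Under squared recovery loss, the Bayes action is the posterior mean, so the risk equals the within-cell variance. We therefore want a gadget in which the revealed feature values carry essentially no information beyond the cell label, and in which there are exactly two useful cells. Then $\min_\sigma R^S(\sigma)$ equals the optimal $2$-means cost, suitably rescaled.

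\textbf{Construction.} Take $d=D+2$ features. Set $M=\{1,\dots,D\}\cup\{a,b\}$, where $a$ and $b$ are two ``dummy'' coordinates. Let $\P$ be uniform over $n$ states $x^{(i)}=(p_i,0,0)$, so the dummy coordinates are constantly $0$. Use the loss $\ell(a,x)=\|a-y(x)\|_2^2$ with $y(x)=x_{1:D}$, the data coordinates only; this is an allowed instance $(\P,A,\ell)$. Revealing $a$ or $b$ tells the naive agent nothing. For the sophisticated agent, however, the label $I\in\{\{a\},\{b\}\}$ indicates one of two cells. Every partition $(C_1,C_2)$ is then realizable: send $C_1$ to $\{a\}$ and $C_2$ to $\{b\}$. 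The resulting risk is $\frac1n\sum_j\sum_{i\in C_j}\|p_i-\mu(C_j)\|^2$. This gives $\min_\sigma R^S\le \mathrm{OPT}_{2\text{-means}}/n$.

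\textbf{Main obstacle: the converse inequality.} The policy could also reveal a data coordinate $i\le D$ or the empty set. Revealing a data coordinate creates up to $D+3$ labels, and the revealed value $x_i\in\{0,1\}$ splits each such label further, so the sophisticated agent may use many more than two cells. We need to prevent this extra information from beating $2$-means. We would try the following fix first: make revealing a data coordinate useless or costly, by moving the data into a target the agent never sees directly. Keep only the two dummy features $a$ and $b$ as machine features (both $\equiv 0$, or distinct constants). Then $d=2$ and $\mathcal X\subseteq\{0,1\}^2$, and the hidden point $p_i$ enters only through the target $y$, encoded by attaching a latent index to each support state. This would require the formal model to allow the loss to depend on variables other than $X$.

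If the model insists that $\ell$ depends only on $x$, we would instead make the observable features consist of $p_i$ together with many duplicated copies. We would then scale the target so that the target depends on the copies while each single revealed coordinate is nearly uninformative, for example by replacing each data bit with a balanced pair of encodings. The remaining work is to bound the extra variance reduction from revealing one bit, so that it cannot push the cost below $\theta/n$ unless a $2$-clustering already does.

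With either variant, the equivalence $\min_\sigma R^S(\sigma)=\mathrm{OPT}/n$ holds, all numbers are polynomial in $n$ and $D$, and NP-hardness of both the value and the argmin follows. Hardness of the argmin follows because $R^S(\sigma)$ can be evaluated in polynomial time for any given $\sigma$.
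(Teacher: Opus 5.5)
\textbf{There is a genuine gap.} Your completeness direction is fine. The soundness direction is the whole difficulty, and it is never established; neither proposed fix works.

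With $k=1$ a sophisticated receiver can distinguish up to $2d+1$ signals: $\emptyset$ and $(\{j\},b)$ for $j\in[d]$, $b\in\{0,1\}$. Each realizable signal is a separate cell with its own centroid. Even in your dummy construction, $\emptyset$ is a third cell, so the claim of ``exactly two useful cells'' fails before data coordinates enter.

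\emph{Fix 1 is not a valid instance of the model.} $\sigma$ is a function of $X$ alone, and $X$ is the entire state. With constant features, or any $\mathcal X\subseteq\{0,1\}^2$ (at most four points), the policy cannot route different data points to different cells. If you let the machine see an unrevealable latent index, you have changed the model. Even then, $\emptyset,\{a\},\{b\}$ give three cells, i.e.\ $3$-means rather than $2$-means.

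\emph{Fix 2 targets the wrong quantity.} For a sophisticated agent it does not matter how informative the revealed bit value is. What matters is which states are pooled under each signal label. Duplicating coordinates only creates more signals the designer can use as extra clusters, so no bound on the variance reduction from one bit recovers a $2$-means equivalence.

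\textbf{The missing idea is to neutralize the unwanted signals rather than make them uninformative.} The paper uses the freedom in $(A,\ell)$ as follows.
\begin{enumerate}
\item Data states are distinct binary vectors $y^i=(0,0,\mathrm{enc}(i))$ with $\ell(a,y^i)=\|a-z_i\|_2^2$ for $a\in\R^p$. This is legitimate because $\ell$ may be any function of $x$ and the $y^i$ are distinct. So your worry about the loss depending on variables outside $X$ disappears, and you also need no binary $2$-means hardness.
\item It adds $2d-1$ gadget states with $x_1=x_2=1$, each able to realize a distinct signal other than the reserved $(\{1\},0)$ and $(\{2\},0)$.
\item It uses discrete actions $\{0,\dots,2d-2\}$: gadget $t$ has loss $0$ only at action $t$, and every mismatch costs $B>T$. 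This includes vector actions on gadgets and discrete actions on data states.
\end{enumerate}
Any policy with risk at most $T/n$ then cannot pool two gadgets, nor a gadget with a data state. By counting, the gadgets occupy all $2d-1$ non-reserved signals. The data states are therefore forced into the two reserved signals, giving exactly a $2$-partition whose cost is the $2$-means cost.
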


\begin{proof}[Proof sketch]
We reduce the problem from Euclidean sum-of-squares clustering with $K=2$, which is NP-hard, as stated by
\cite{aloise2009np}. Given points $z_1,\dots,z_m\in\mathbb R^p$, choose
$d=2+\lceil\log_2 m\rceil$ so $2^{d-2}\ge m$. Under bandwidth $k=1$, any deterministic policy induces at most
$|\mathcal S|=2d+1$ signals. Reserve two signals $s_1=(\{1\},0)$ and $s_2=(\{2\},0)$ for clustering.

Construct $m$ \emph{data} states $y^i=(0,0,\mathrm{enc}(i))\in\{0,1\}^d$ with distinct tails, and
$2d-1$ \emph{gadget} states $x^0,\dots,x^{2d-2}\in\{0,1\}^d$ with $x^t_1=x^t_2=1$, one for each
non-reserved signal, so each gadget can realize a distinct non-reserved signal under $k=1$.
Let $\mathbb P$ be uniform over the $n=m+(2d-1)$ states. Let $A=\mathbb R^p\cup\{0,1,\dots,2d-2\}$ and pick a large
constant $B$ larger than any feasible clustering cost bound. Define losses by
$\ell(t,x^t)=0$ and $\ell(a,x^t)=B$ for $a\neq t$, while for data states
$\ell(a,y^i)=\|a-z_i\|_2^2$ if $a\in\mathbb R^p$ and $\ell(a,y^i)=B$ otherwise.

With this construction, any policy that pools (i) two distinct gadgets or (ii) a gadget with a data state
induces conditional Bayes risk at least $B$ under that signal, so an optimal policy must assign all gadgets
to distinct non-reserved signals and must not mix gadgets with data states. Hence all data states must be
mapped to exactly the two reserved signals $s_1,s_2$, inducing a partition $(C_1,C_2)$ of $[m]$.
Under squared loss, the sophisticated best response to each signal is the posterior mean,
so the optimal highlighting value satisfies
$$
\min_{\sigma} R^S(\sigma)
\;=\;
\frac{1}{n}\min_{(C_1,C_2)}\ \sum_{t=1}^2\sum_{i\in C_t}\|z_i-\mu_t\|_2^2,
$$
where $\mu_t$ is the centroid of $\{z_i:i\in C_t\}$. Therefore, computing $\min_\sigma R^S(\sigma)$
computes the Euclidean $2$-means optimum, implying NP-hardness.
\end{proof}

\subsection{Tractability of Optimal Policies for Naive Updating}
\label{sec:algorithms}

The above result shows that determining the optimal highlighting policy for a sophisticated agent is NP-hard, even when $k=1$.
In this subsection, we characterize optimal highlighting for a naive agent.
The naive model exhibits a different computational complexity from the sophisticated model: in particular, the machine's optimization problem is polynomial-time solvable when $k$ is a fixed constant.

A key advantage of the naive model is computational.
Because the naive agent treats the selection event as exogenous, the machine can evaluate the realized loss under any candidate revealed set $I$ pointwise for each realized instance $x$ by computing the naive posterior $\P(\cdot\mid X_I=x_I)$ and the induced Bayes action.
When the bandwidth $k$ is bounded, this yields an exact algorithm via enumeration; we also present three scalable heuristics that will be useful in binary and correlated settings.

\paragraph{Exact optimization by enumeration.}
For each realized instance $x$, define the realized loss induced by revealing $(I,x_I)$ under the naive posterior:
$$
L^N(I;x)\;:=\;\ell\!\left(\hat a\!\left(\ell,\P(\cdot\mid X_I=x_I)\right),\,x\right).
$$
The naive--optimal policy at $x$ is obtained by minimizing $L^N(I;x)$ over all $|I|\le k$.
The exact optimization algorithm would enumerate all possible $I$ and calculate the loss (\autoref{alg:enum-naive}).
The following remark implies that if $k$ is small, then this algorithm can find the optimal policy in polynomial time.

\begin{algorithm}[h]
\caption{Exact enumeration}
\label{alg:enum-naive}
\KwInput{prior $\P$ on $X$ (discrete support), loss $\ell$, action set $A$, realized instance $x$, bandwidth $k$}
\KwOutput{highlight set $\sigma^{\mathrm{enum}}(x)\subseteq[d]$}
\DontPrintSemicolon

$I^\star \gets \emptyset$; \quad $L^\star \gets +\infty$;\;

\For{$I\subseteq[d]$ \textnormal{with} $|I|\le k$}{
    Compute naive posterior $\P_I(\cdot)\gets \P(\cdot \mid X_I=x_I)$;\;
    Compute Bayes action $\hat a_I \in \argmin_{a\in A}\ \E_{\P_I}[\ell(a,X)]$;\;
    $L \gets \ell(\hat a_I, x)$; \;
    \If{$L < L^\star$}{
        $L^\star \gets L$; \quad $I^\star \gets I$;\;
    }
}
\Return $\sigma^{\mathrm{enum}}(x)\gets I^\star$;\;
\end{algorithm}

\begin{remark}[Tractability of naive--optimal highlighting for bounded bandwidth]
\label{rm:naiveopt}
Consider the decision-loss model with a naive agent. Fix $k\in\mathbb{N}$.
Assume that, given $X_I$, one can evaluate the naive posterior $\P(\cdot \mid X_I)$ and compute a Bayes-optimal action
$
\hat a(\ell,\hat{\P}^{N}) \in \argmin_{a\in A}\; \E_{\hat{\P}^{N}}[\ell(a,X)]
$
in time $\mathrm{poly}(\mathrm{size}(\P,\ell,A))$.
Then an optimal highlighting policy for the naive agent can be computed in time
$
O\!\left(n \cdot \sum_{t=0}^{k} \binom{d}{t}\right)\cdot \mathrm{poly}(\mathrm{size}(\P,\ell,A)).
$
In particular, for constant $k$, computing a naive--optimal highlighting policy is polynomial-time in the input size.
\end{remark}

\subsection{Greedy-Type Algorithms}

When posterior updates are expensive, the model is high-dimensional, or the bandwidth is relatively large,
then complete enumeration may still be computationally unattractive.

\paragraph{A fast first-moment heuristic.}
An alternative for real-valued features $X_i$ is to highlight features whose realized values deviate most from their marginal means.
Let $\mu_i:=\E[X_i]$ and define $s_i(x):=|x_i-\mu_i|$.
The deviation heuristic selects the $k$ coordinates with the largest scores (\autoref{alg:dev}).
Indeed, the algorithm can be seen as using a surprise-based heuristic: it reports those components that represent the largest surprise relative to their expectation.
If two scores are equal, we break ties by choosing the one with the smallest index, which provides additional information for the sophisticated agent.
This rule ignores correlations and higher-order structure; nevertheless, in the independent Bernoulli recovery benchmark of \autoref{sec:binary}, it is optimal.

\begin{algorithm}[h]
\caption{Deviation heuristic}
\label{alg:dev}
\KwInput{marginal means $\mu_i=\E[X_i]$ for $i\in[d]$, realized instance $x$, bandwidth $k$}
\KwOutput{highlight set $\sigma^{\mathrm{dev}}(x)\subseteq[d]$}
\DontPrintSemicolon

\For{$i\gets 1$ \KwTo $d$}{
    $s_i \gets |x_i-\mu_i|$;\;
}
Let $I^\star$ be the indices of the $k$ largest values in $\{s_i\}_{i=1}^d$;\;
\Return $\sigma^{\mathrm{dev}}(x)\gets I^\star$;\;
\end{algorithm}

\paragraph{Marginal heuristic.}
Another fast contextual rule evaluates each feature only once, starting from the empty highlighted set.
For each coordinate $j\in[d]$, define the singleton marginal improvement
$$
\Delta(j\mid \emptyset,x)\;:=\;L^N(\emptyset;x)-L^N(\{j\};x).
$$
The marginal heuristic computes these one-step gains, ranks features by $\Delta(j\mid \emptyset,x)$, and highlights the top $k$ coordinates (\autoref{alg:marginal}).
Unlike the deviation heuristic, this rule can account for correlations through the posterior induced by revealing a single feature.
Unlike the greedy rule below, however, it does not update the ranking after earlier features have been selected, and therefore ignores overlap across the chosen set.

\begin{algorithm}[h]
\caption{Marginal heuristic}
\label{alg:marginal}
\KwInput{prior $\P$, loss $\ell$, action set $A$, realized instance $x$, bandwidth $k$}
\KwOutput{highlight set $\sigma^{\mathrm{marg}}(x)\subseteq[d]$}
\DontPrintSemicolon

$R_0 \gets L^N(\emptyset;x)$;\;
\For{$j\gets 1$ \KwTo $d$}{
    $\Delta_j \gets R_0 - L^N(\{j\};x)$;\;
}
Let $I^\star$ be the indices of the $k$ largest values in $\{\Delta_j\}_{j=1}^d$;\;
\Return $\sigma^{\mathrm{marg}}(x)\gets I^\star$;\;
\end{algorithm}

\paragraph{Greedy information gain.}

Another scalable design is a greedy policy that builds the highlighted set sequentially by maximizing the one-step reduction in the realized loss at the realized instance $x$, where the action is chosen as a Bayes action under the naive posterior.
For a candidate coordinate $j\notin I$, define the marginal improvement
$$
\Delta(j\mid I,x)\;:=\;L^N(I;x)-L^N(I\cup\{j\};x).
$$
Starting from $I=\emptyset$, the algorithm repeatedly adds the coordinate with the largest $\Delta(j\mid I,x)$, as long as this improvement is positive (\autoref{alg:greedy-ig}).

\begin{algorithm}[h]
\caption{Greedy information gain with early stopping}
\label{alg:greedy-ig}
\KwInput{prior $\P$, loss $\ell$, action set $A$, realized instance $x$, bandwidth $k$}
\KwOutput{highlight set $\sigma^{\mathrm{greedy}}(x)\subseteq[d]$}
\DontPrintSemicolon

$I \gets \emptyset$;\;
$\hat a_I \in \argmin_{a\in A}\ \E[\ell(a,X)\mid X_I=x_I]$;\;
$R \gets \ell(\hat a_I, x)$;\;

\For{$t\gets 1$ \KwTo $k$}{
    $j^\star \gets \bot$;\quad $\Delta^\star \gets -\infty$;\quad $R^\star \gets +\infty$;\;

    \For{$j\in[d]\setminus I$}{
        $\hat a_{I\cup\{j\}} \in \argmin_{a\in A}\ \E[\ell(a,X)\mid X_{I\cup\{j\}}=x_{I\cup\{j\}}]$;\;
        $R_j \gets \ell(\hat a_{I\cup\{j\}}, x)$;\;
        $\Delta_j \gets R - R_j$;\;

        \If{$\Delta_j > \Delta^\star$}{
            $\Delta^\star \gets \Delta_j$;\quad $j^\star \gets j$;\quad $R^\star \gets R_j$;\;
        }
    }
    \If{$\Delta^\star \le 0$}{
        \textbf{break};\tcp*{withhold further information}
    }
    $I \gets I\cup\{j^\star\}$;\quad $R \gets R^\star$;\;
}
\Return $\sigma^{\mathrm{greedy}}(x)\gets I$;\;
\end{algorithm}

\paragraph{Relationship among the algorithms.}
\autoref{alg:enum-naive} is exact and polynomial-time for constant $k$.
\autoref{alg:dev} is a fast heuristic that uses only first moments.
\autoref{alg:marginal} is an intermediate rule as it accounts for the posterior effect of revealing each feature individually, but it ignores redundancy across selected features because it never updates the ranking after a coordinate is added.
\autoref{alg:greedy-ig} is a greedy algorithm that accounts for belief updates and can exploit correlations: revealing one feature can change the posterior over others.
In the independent Bernoulli recovery benchmark (\autoref{sec:binary}), \autoref{alg:greedy-ig} and \autoref{alg:marginal} reduce to \autoref{alg:dev} and are optimal, so they lead to the same highlight sets as \autoref{alg:enum-naive}.

\section{Asymptotic Analysis for Binary Features}\label{sec:binary}

Above, we have considered general solutions for optimally highlighting features that apply for general features and loss functions.
There we have shown that optimal solutions as well as the computational complexity of finding them depend on how beliefs are formed.
In this section, we consider the specific case of binary features (i.e., $\mathcal{X}=\{0,1\}^d$) with the loss function given by the recovery loss $\ell(a,x) = \|a - x \|_2^2$.
Our goal is to characterize the asymptotic performance of the computationally attractive, simple greedy-type algorithms from the previous section.

In order to illustrate optimal solutions explicitly and characterize their performance, we first consider naive--optimal highlighting when features are independent.
We compare contextual highlighting to fixed subset selection, highlighting a fundamental difference between ex-ante and ex-post selection.
We argue that greedy-style algorithms can perform well and serve as candidate ``universal'' policies that remain competitive for sophisticated agents while being robust for naive agents.
Finally, we discuss extensions to weakly correlated settings.

\subsection{Closed-Form Optimal Policies for Naive Updating}

We first solve for optimal highlighting policies for naive updaters in the case when features are independent Bernoulli (but not necessarily identically distributed).
Specifically, let
$X_1,\dots,X_d$ be independent with $X_i\sim\mathrm{Bern}(p_i)$ and $p_i=\E[X_i]$.
Consider the squared recovery objective with $A=\R^d$ and
$
\ell(a,X)=\|a-X\|_2^2=\sum_{i=1}^d (a_i-X_i)^2.
$
If the machine reveals $(I,X_I)$, then under the naive posterior the Bayes action is
$$
\hat a_i=
\begin{cases}
X_i, & i\in I,\\
p_i, & i\notin I,
\end{cases}
$$
and therefore the realized loss at instance $X$ equals
\begin{equation}\label{eq:realized-loss-binary}
L^N(I;X)
=\sum_{i\notin I} (X_i-p_i)^2.
\end{equation}

\paragraph{Fixed subset benchmark (ex-ante selection).}
A fixed subset policy chooses a single set $I^\star$ with $|I^\star|\le k$ and reveals it for every instance.
Taking expectation in~\eqref{eq:realized-loss-binary} gives
$$
\E[L^N(I;X)]
=\sum_{i\notin I} \E[(X_i-p_i)^2]
=\sum_{i\notin I} \Var(X_i)
=\sum_{i\notin I} p_i(1-p_i).
$$
Hence, the optimal fixed subset reveals the $k$ coordinates with the largest variances $p_i(1-p_i)$.

\begin{lemma}[Optimal fixed subset under independence]\label{lm:fixed-indep}
In the independent Bernoulli recovery model, an optimal fixed subset policy chooses
$$
I^{\mathrm{fix}}\in\argmax_{I:\,|I|\le k}\ \sum_{i\in I} p_i(1-p_i),
$$
i.e., the $k$ coordinates with largest $p_i(1-p_i)$.
\end{lemma}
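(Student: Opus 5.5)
The plan is to reduce the fixed-subset design problem to a separable maximization over coordinates, using the realized naive loss in~\eqref{eq:realized-loss-binary}. First I will note that under a fixed subset policy $\sigma(x)\equiv I$ the selection event carries no information. So the naive and sophisticated posteriors coincide, both equal to $\P(\cdot\mid X_I)$, and it suffices to compute the naive risk. By independence, $\P(\cdot\mid X_I=x_I)$ leaves the marginals of $X_j$, $j\notin I$, unchanged. The Bayes action under squared loss is therefore $\hat a_i=X_i$ for $i\in I$ and $\hat a_i=p_i$ for $i\notin I$, which gives $L^N(I;X)=\sum_{i\notin I}(X_i-p_i)^2$ as already displayed.

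Next I will take expectations term by term. This gives $\E[L^N(I;X)]=\sum_{i\notin I}\Var(X_i)=\sum_{i\notin I}p_i(1-p_i)$, using $\Var(X_i)=p_i(1-p_i)$ for a Bernoulli variable. Writing $V:=\sum_{i=1}^d p_i(1-p_i)$, which does not depend on $I$, the risk equals $V-\sum_{i\in I}p_i(1-p_i)$. Minimizing risk over $|I|\le k$ is then equivalent to maximizing $\sum_{i\in I}p_i(1-p_i)$ over $|I|\le k$, which is exactly the $\argmax$ in the statement.

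Finally, I will justify that this argmax is attained by the $k$ coordinates with the largest $p_i(1-p_i)$. Each summand is nonnegative, so enlarging $I$ up to size $k$ never hurts, and we may take $|I|=k$ (assuming $k\le d$; otherwise take $I=M$). For $|I|=k$, a standard exchange argument applies. If some $i\in I$ and $j\notin I$ satisfy $p_j(1-p_j)>p_i(1-p_i)$, then swapping $i$ for $j$ strictly increases the objective. Hence any maximizer consists of top-$k$ coordinates, and ties do not affect optimality.

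None of these steps is a real obstacle. The only point that needs care is to state explicitly that with a constant policy there is no selection information, so the lemma holds for both agent types. The rest is linearity of expectation and a sorting argument.
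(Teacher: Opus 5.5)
Your proposal is correct and follows essentially the same route as the paper: compute the naive Bayes action under independence, take expectations to obtain $\sum_{i\notin I} p_i(1-p_i)$, and turn risk minimization into maximizing $\sum_{i\in I} p_i(1-p_i)$ over $|I|\le k$. Your two additions are correct refinements of steps the paper leaves implicit: that a constant policy makes the naive and sophisticated posteriors coincide, and the explicit nonnegativity and exchange argument for the top-$k$ characterization.
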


\paragraph{Contextual highlighting (ex-post selection).}
A contextual policy may choose $I$ as a function of the realized instance $X$.
Because the naive realized loss~\eqref{eq:realized-loss-binary} is additive across coordinates and depends on each coordinate only through $(X_i-p_i)^2$, the per-instance optimal rule is immediate.

\begin{lemma}[naive--optimal contextual highlighting under independence]\label{lm:contextual-indep}
In the independent Bernoulli recovery model, the naive--optimal contextual policy is
$$
\sigma^N(x)\in\argmax_{I:\,|I|\le k}\ \sum_{i\in I} (x_i-p_i)^2,
$$
i.e., it reveals the $k$ coordinates with the largest ``surprises'' $(x_i-p_i)^2$.
\end{lemma}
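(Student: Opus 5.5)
The plan is to reduce the design problem to a pointwise optimization and then solve that pointwise problem using the additive form of the naive loss in \eqref{eq:realized-loss-binary}.

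\emph{Step 1: reduce to a per-instance problem.} Under naive updating, the agent's posterior after seeing $(I,x_I)$ is $\P(\cdot\mid X_I=x_I)$. This does not depend on $\sigma$ at all. Hence the realized loss at $x$ under any policy is exactly $L^N(\sigma(x);x)$, and the risk is
$$
R^N(\sigma)=\E\bigl[L^N(\sigma(X);X)\bigr].
$$
Since $\sigma$ may choose $\sigma(x)$ separately for each $x$, minimizing this expectation over all policies is equivalent to choosing, for every $x$, a set $I$ with $|I|\le k$ that minimizes $L^N(I;x)$. In one direction, any pointwise minimizer $\sigma$ satisfies $L^N(\sigma(x);x)\le L^N(\sigma'(x);x)$ for every alternative policy $\sigma'$ and every $x$. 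Taking expectations gives $R^N(\sigma)\le R^N(\sigma')$.

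\emph{Step 2: compute the naive Bayes action.} I will verify the displayed Bayes action. Independence gives $\P(X_i\in\cdot\mid X_I)=\mathrm{Bern}(p_i)$ for $i\notin I$, and squared loss makes the posterior mean optimal coordinatewise. So $\hat a_i=x_i$ for $i\in I$ and $\hat a_i=p_i$ for $i\notin I$. This yields
$$
L^N(I;x)=\sum_{i\notin I}(x_i-p_i)^2=\sum_{i=1}^d(x_i-p_i)^2-\sum_{i\in I}(x_i-p_i)^2 .
$$

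\emph{Step 3: solve the pointwise problem.} The first sum in the last display does not depend on $I$. Minimizing $L^N(I;x)$ over $|I|\le k$ is therefore the same as maximizing $\sum_{i\in I}(x_i-p_i)^2$ over $|I|\le k$, which is the stated argmax. All terms are nonnegative, so a maximizer can be taken to consist of the $k$ coordinates with the largest surprises. Ties, and coordinates with zero surprise (possible only when $p_i\in\{0,1\}$), are irrelevant to optimality.

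There is no real obstacle here. The only point needing care is Step 1: the separability argument relies on naive posteriors being invariant to $\sigma$. That invariance is exactly what fails in the sophisticated case.
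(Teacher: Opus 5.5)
Your proof is correct and matches the paper's: the paper also writes $L^N(I;x)$ as the total surprise $\sum_{i=1}^d (x_i-p_i)^2$ minus the revealed surprise $\sum_{i\in I}(x_i-p_i)^2$ and maximizes the latter pointwise in $x$. Your Step 1 just makes explicit the pointwise reduction that the paper leaves implicit, namely that naive posteriors do not depend on $\sigma$.
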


\paragraph{Contextual can beat fixed subset.}
Since a fixed subset policy is a special case of a contextual policy, it follows immediately that contextual highlighting weakly dominates fixed subset selection:
$$
R^N(\sigma^N) =
\E\!\left[\min_{I:\,|I|\le k}\ \sum_{i\notin I} (X_i-p_i)^2\right]
\ \le\
\min_{I:\,|I|\le k}\ \E\!\left[\sum_{i\notin I} (X_i-p_i)^2\right]
=R^N(\sigma^{\mathrm{fix}})
.
$$
Moreover, the inequality is strict whenever the identity of the $k$ largest surprise terms is not almost surely constant.
Intuitively, fixed subset selection hedges against uncertainty in expectation, whereas contextual selection exploits instance-level variation by revealing whichever coordinates happen to be most informative for that particular $x$.

\paragraph{Algorithmic implication.}
In the independence setting, by \autoref{lm:contextual-indep}, \autoref{alg:dev} (deviation) is optimal.
Furthermore, \autoref{alg:marginal} (marginal heuristic) and \autoref{alg:greedy-ig} (greedy information gain) coincide with \autoref{alg:dev}: revealing coordinate $i$ eliminates the term $(x_i-p_i)^2$ from~\eqref{eq:realized-loss-binary}, so the best marginal improvement is attained by the largest current surprise.

\subsection{Asymptotic Characterization of Average Losses}

We now solve for the asymptotic risk of the above policy for a naive and for a sophisticated updater.
In order to calculate approximate losses under the above policies, we consider sequences of distributions \(\P_d\)
over \(X_d = (X_{d,1},\ldots,X_{d,d})\)
with independent coordinates for each \(d\).
Each such distribution \(\P_d\) is given by the marginal probabilities
\begin{align*}
    p_d &= (p_{d,1}, \ldots, p_{d,d})
    \in [0,1]^d,
    &
    \E_d[X_{d,i}] &= p_{d,i}.
\end{align*}
We are specifically interested in the case where highlighting matters in a limit. Therefore,
we assume that the bandwidth \(k\) itself increases with the dimension:

\begin{assumption}[Proportional bandwidth]
\label{asm:bandwidth}
The bandwidth $k$ satisfies
\(k/d \rightarrow \alpha \in (0,1)\) as \(d \rightarrow \infty\).
\end{assumption}

In order to derive closed-form expressions for the expected loss under the above policies,
we assume that the sequence $(p_d)_{d \geq 1}$ is stable in the sense that the distribution over the $p_{d,i}$ themselves converges to a distribution over the unit interval with CDF $F$:

\begin{assumption}[Stability of the sequence of distributions]
\label{asm:stability}
    There is a monotone increasing c\`adl\`ag function $F: [0,1] \rightarrow [0,1]$
    with $\lim_{t \rightarrow 0} F(t) = 0, \lim_{t \rightarrow 1} F(t) = 1$
    such that
    \begin{equation}
        \frac{1}{d} \sum_{i=1}^d \mathbbm{1}(p_{d,i} \leq t) \rightarrow F(t)
        \label{eqn:stability}
    \end{equation}
    as $d \rightarrow \infty$
    for all continuity points $t$ of $F$.
\end{assumption}

This assumption leaves the $p_{d,i}$ flexible, and merely assumes that their distribution across $i$ can be approximated well by some function $F$.

For formulating our results, it will be convenient to transform the distributions $p_d$ and realizations $X_d$.
Specifically, for all $d, i, t$, let
\begin{align*}
    p^*_{d,i} &= \begin{cases}
        p_{d,i}, &p_{d,i} \leq \frac{1}{2}, \\
        1 {-} p_{d,i}, & \text{otherwise}
    \end{cases}
    =
    \min(p_{d,i}, 1 {-} p_{d,i}),
    \\
    X^*_{d,i} &= \begin{cases}
        X_{d,i}, &p_{d,i} \leq \frac{1}{2}, \\
        1 {-} X_{d,i}, & \text{otherwise}
    \end{cases}
    = \begin{cases}
        X_{d,i}, &p_{d,i} = p^*_{d,i}, \\
        1 {-} X_{d,i}, & \text{otherwise},
    \end{cases}
    \\
    F^*(t) &= \begin{cases}
        F(t) + 1 {-} F(1{-}t), &t \leq \frac{1}{2},
        \\
        1, & \text{otherwise}.
    \end{cases}
\end{align*}
Then $(p_{d,i}^*)_{d,i}, F^*$ fulfills \eqref{eqn:stability} with $F^*$ continuously and (weakly) monotonically increasing.
Furthermore, by the symmetry of the loss function, we can focus on highlighting the transformed features $X^*_d$.

We now formulate our main policies of interest in terms of the transformed quantities $p^*_d, X^*_d$.
For every $d$ consider a (fixed) ordering
$(p^*_{d,(1)},\ldots, p^*_{d,(d)})$ of $p^*$ such that the $p^*_{d,(j)}$ are monotonically increasing, with corresponding realizations $X^*_{d,(j)} \sim \text{Bern}(p^*_{d,(j)})$.
Then the two highlighting choices of interest can, without loss of generality, be written as:
\begin{enumerate}[label=(\alph*)]
    \item \emph{Fixed ex-ante highlighting}:
    An optimal \emph{fixed} set for highlighting of size $k$ for the naive agent consists of $X^*_{d,(d - k + 1)}, \ldots, X^*_{d,(d)}$, corresponding to the $k$ highest probabilities $p^*_{d,i}$, which correspond exactly to those original cases where $p_{d,i} (1- p_{d,i})$ is maximal.
    \label{enum:fixed}
    \item \emph{Greedy data-driven highlighting} (Algorithms~\ref{alg:enum-naive}, \ref{alg:dev}, \ref{alg:marginal}, \ref{alg:greedy-ig}):
    An optimal \emph{data-driven} set for highlighting of size at most $k$ consists of the
    $X^*_{d,(j)}$
    for the first $k$ indices $j$ for which $X^*_{d,(j)} = 1$.%
    \footnote{If there are less than $k$ such indices, then we add the highest indices $j$ for which $X^*_{d,(j)} = 0$.}
    These correspond to the features with the $k$ highest original values of $(X_{d,i} - p_{d,i})^2$.
    \label{enum:greedy}
\end{enumerate}
For developing our results, it will be helpful to consider a third highlighting scheme:
\begin{enumerate}[label=(\alph*),resume]
    \item \emph{Fixed surprise highlighting}: For a fraction \(\beta \in (0,1)\), highlight those among the features \(X^*_{d,(1)}, \ldots, X^*_{d,(\lfloor \beta d \rfloor)}\) with \(X^*_{d,(j)} = 1\), corresponding to a varying number of highlighted features (that may be lower or higher than \(k\), and can thus be infeasible).
    \label{enum:fraction}
\end{enumerate}
To establish the following results, we show that \autoref{enum:greedy} and \autoref{enum:fraction} are asymptotically equivalent for the right value of $\beta$, which also allows us to characterize their asymptotic gap to \autoref{enum:fixed}.
In order to state these results,
it will be helpful to consider the quantile function
\begin{align*}
    Q^*(q) = \inf\{t \in [0,1]; F^*(t) \geq q\}
\end{align*}
corresponding to the cdf $F^*$, which is its generalized inverse.
Based on $Q^*$, we first characterize the asymptotic risk for fixed highlighting, where the highlighted set is chosen ex-ante.

\begin{proposition}[Asymptotic risk of optimal fixed highlighting]
\label{prop:fixedasymptotic}
Assume that Assumptions~\ref{asm:bandwidth} and \ref{asm:stability} hold.
Then, under optimal fixed highlighting (\autoref{enum:fixed}), both the naive and the sophisticated agent have the same asymptotic normalized risk, given by
\[
\lim_{d\to\infty} R^N(\sigma^{\ref{enum:fixed}})
=
\lim_{d\to\infty} R^S(\sigma^{\ref{enum:fixed}})
=
\lim_{d\to\infty}
\E\left[
\frac{1}{d} \sum_{j=1}^{d-k}
\bigl(X^*_{d,(j)}-p^*_{d,(j)}\bigr)^2
\right]
=
\int_{0}^{1-\alpha} Q^*(q)\bigl(1-Q^*(q)\bigr)\,\de q.
\]
\end{proposition}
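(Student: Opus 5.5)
The argument has three parts: show that the two agents incur the same loss, write that loss as a deterministic sum, and pass to the limit through the empirical quantile function.

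\emph{Naive equals sophisticated.} Under \autoref{enum:fixed} the highlighted set $I^{\mathrm{fix}}_d=\{(d-k+1),\dots,(d)\}$ does not depend on $X$. The event $\{\sigma(X)=I^{\mathrm{fix}}_d\}$ therefore has probability one, so
\[
\hat{\P}^S(\cdot\mid I,X_I)=\P(\cdot\mid X_I)=\hat{\P}^N(\cdot\mid I,X_I).
\]
The same argument was used in \autoref{ex:parity2}. Both agents then play the same Bayes action, so $R^N=R^S$ for every $d$.

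\emph{The loss as a deterministic sum.} By independence, the posterior mean of an unrevealed coordinate is its prior mean. Together with the symmetry $(X_i-p_i)^2=(X^*_i-p^*_i)^2$, this gives the middle expression. Taking expectations turns the risk, normalized by $d$ as the statement implicitly intends, into
\[
\frac1d\sum_{j=1}^{d-k}p^*_{d,(j)}\bigl(1-p^*_{d,(j)}\bigr).
\]

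\emph{Passing to the limit.} Let $h(t)=t(1-t)$, which is continuous and bounded on $[0,1/2]$. Let $Q_d$ be the empirical quantile function of $p^*_{d,\cdot}$, defined by $Q_d(q)=p^*_{d,(\lceil qd\rceil)}$ for $q\in(0,1]$. Then the sum above equals
\[
\int_0^{(d-k)/d} h\bigl(Q_d(q)\bigr)\,\de q
\]
exactly, since $Q_d$ is a step function with steps of width $1/d$. The remaining steps are:
\begin{enumerate}[label=(\roman*)]
\item Assumption~\ref{asm:stability}, transferred to $F^*$, gives weak convergence of the empirical distributions of $p^*_{d,i}$ to $F^*$.
\item Weak convergence is equivalent to $Q_d(q)\to Q^*(q)$ at every continuity point of $Q^*$. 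Since $Q^*$ is monotone, its discontinuities are countable, so the convergence holds for Lebesgue-almost every $q$.
\item By continuity of $h$, $h(Q_d(q))\to h(Q^*(q))$ almost everywhere. Because $h\le 1/4$, bounded convergence applies.
\item By Assumption~\ref{asm:bandwidth}, the upper limit $(d-k)/d\to 1-\alpha$. The integration range differs from $[0,1-\alpha]$ on a set of measure $o(1)$, and the integrand is bounded, so this costs only $o(1)$. The integral therefore converges to $\int_0^{1-\alpha}Q^*(q)(1-Q^*(q))\,\de q$.
\end{enumerate}

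\emph{Main obstacle.} The delicate step is the quantile convergence in (ii). We need the standard fact that convergence of CDFs at continuity points implies almost-everywhere convergence of the generalized inverses. We also need to verify that the transformation to $p^*$ preserves the stability assumption even when $F$ has an atom at $1/2$ or at $t$ versus $1-t$. For this I would use the representation $F^*(t)=\frac1d\#\{i: p_{d,i}\le t\}+\frac1d\#\{i: p_{d,i}\ge 1-t\}$ and apply Assumption~\ref{asm:stability} at continuity points of both $F(t)$ and $F((1-t)^-)$, which are all but countably many $t$. The remaining steps are routine bounded-convergence bookkeeping.
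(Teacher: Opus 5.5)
Your proposal is correct and follows essentially the same route as the paper's proof. Both reduce the normalized fixed-set risk to the deterministic sum $\frac1d\sum_{j=1}^{d-k} p^*_{d,(j)}(1-p^*_{d,(j)})$, rewrite it as an integral of the empirical quantile function over $[0,(d-k)/d]$, replace the upper limit by $1-\alpha$ at $o(1)$ cost, and pass to the limit via convergence of quantile functions. Your version makes explicit what the paper leaves implicit: why naive and sophisticated risks coincide under a constant selection, the almost-everywhere quantile convergence combined with bounded convergence, and the transfer of Assumption~\ref{asm:stability} from $F$ to $F^*$ when atoms are present, a point the paper treats only in passing.
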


We next provide an explicit expression for highlighting surprises based on a fixed fraction of features.

\begin{proposition}[Asymptotic risk of fixed surprise highlighting]
\label{prop:fractionasymptotic}
Assume that Assumptions~\ref{asm:bandwidth} and \ref{asm:stability} hold, and consider \autoref{enum:fraction} for a given \(\beta \in (0,1)\).

\begin{enumerate}[label=(\roman*)]
\item \emph{Sophisticated agent}: The asymptotic normalized risk is
\[
\lim_{d\to\infty} R^S(\sigma^{\ref{enum:fraction}}_\beta)
=
\lim_{d\to\infty}
\E\left[
\frac{1}{d} \sum_{j=\lfloor \beta d\rfloor + 1}^{d}
\bigl(X^*_{d,(j)}-p^*_{d,(j)}\bigr)^2
\right]
=
\int_{\beta}^{1} Q^*(q)\bigl(1-Q^*(q)\bigr)\,\de q.
\]
\item \emph{Naive agent}: The asymptotic normalized risk is
\begin{align*}
\lim_{d\to\infty} R^N(\sigma^{\ref{enum:fraction}}_\beta)
&=
\lim_{d\to\infty}
\E\left[
\frac{1}{d}\sum_{j=1}^{\lfloor \beta d\rfloor}
\Ind\bigl(X^*_{d,(j)}=0\bigr)\bigl(X^*_{d,(j)}-p^*_{d,(j)}\bigr)^2
+
\frac{1}{d} \sum_{j=\lfloor \beta d\rfloor + 1}^{d}
\bigl(X^*_{d,(j)}-p^*_{d,(j)}\bigr)^2
\right]
\\
&=
\int_{0}^{\beta} \bigl(Q^*(q)\bigr)^2\bigl(1-Q^*(q)\bigr)\,\de q
+
\int_{\beta}^{1} Q^*(q)\bigl(1-Q^*(q)\bigr)\,\de q.
\end{align*}
\end{enumerate}
\end{proposition}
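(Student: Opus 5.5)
The plan is to compute both risks exactly at every finite $d$ as a sum of per-coordinate terms, and only then pass to the limit. By the symmetry of the squared loss under $X_{d,i}\mapsto 1-X_{d,i}$, $p_{d,i}\mapsto 1-p_{d,i}$, I work throughout with the transformed coordinates $X^*_{d,(j)}\sim\mathrm{Bern}(p^*_{d,(j)})$. These are independent across $j$, and the risk is normalized by $1/d$ as in \autoref{prop:fixedasymptotic}. Write $m=\lfloor\beta d\rfloor$. Under \autoref{enum:fraction}, the highlighted set is $I=\{(j): j\le m,\ X^*_{d,(j)}=1\}$. This set is a deterministic function of the block $(X^*_{d,(1)},\dots,X^*_{d,(m)})$ alone.

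\emph{Sophisticated agent.} Observing $I$ (which carries $X_I$ with it) reveals every coordinate in the first block. Coordinates with $j\le m$ and $(j)\in I$ equal $1$. Coordinates with $j\le m$ and $(j)\notin I$ must equal $0$, since $\sigma(X)=I$ forces this. Hence the loss on the first block is zero. For $j>m$, the event $\{\sigma(X)=I, X_I=x_I\}$ is measurable with respect to the first block. By independence, the posterior of $X^*_{d,(j)}$ therefore equals its prior, so the Bayes action is $p^*_{d,(j)}$ and the expected loss is $p^*_{d,(j)}(1-p^*_{d,(j)})$. This gives the middle expression in (i) exactly.

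\emph{Naive agent.} The naive posterior is $\P(\cdot\mid X_I)$. By independence, it predicts $p^*_{d,(j)}$ for every unrevealed coordinate, with no inference from the selection. Coordinates with $j\le m$ and $X^*_{d,(j)}=1$ are revealed and contribute zero. Coordinates with $j\le m$ and $X^*_{d,(j)}=0$ contribute $(p^*_{d,(j)})^2$, which has expectation $(1-p^*_{d,(j)})(p^*_{d,(j)})^2$. Coordinates with $j>m$ again contribute the variance. This yields the middle expression in (ii) exactly.

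The remaining task is the limit. Both expressions have the form $\frac1d\sum_{j\in J_d} g(p^*_{d,(j)})$, where $g$ is continuous and bounded on $[0,1]$, with $g(t)=t(1-t)$ or $g(t)=t^2(1-t)$, and $J_d$ is the index range below or above $m$. I rewrite such a sum as $\int g(Q_d(q))\,\de q$ over $q\in(0,m/d]$ or $(m/d,1]$. Here $Q_d(q)=p^*_{d,(\lceil qd\rceil)}$ is the empirical quantile function of the sorted $p^*_{d,\cdot}$, and replacing $m/d$ by $\beta$ costs $O(1/d)$ since $g$ is bounded. By \autoref{asm:stability}, the empirical CDF of the $p^*_{d,i}$ converges to $F^*$ at continuity points. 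The standard equivalence between weak convergence and quantile convergence then gives $Q_d(q)\to Q^*(q)$ at every continuity point of $Q^*$. Because $Q^*$ is monotone, these points cover all $q$ outside a countable set, i.e.\ almost every $q$. Bounded convergence then gives the stated integrals. \autoref{asm:bandwidth} is not actually needed for this scheme, since the number of highlighted features is not tied to $k$.

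I expect the main obstacle to be this quantile-convergence step. $F^*$ may have atoms or flat stretches, so $Q^*$ can jump, and the sorted $p^*_{d,(j)}$ need not converge pointwise at the jump locations. My first approach is to cite the classical result that $F_d\Rightarrow F^*$ implies $Q_d\to Q^*$ at continuity points of $Q^*$, and to note that this failure set is countable, hence Lebesgue-null. If that step turns out to be delicate, a fallback is to write the sum as $\int g\,\de\mu_d$ against the empirical measure and apply the portmanteau theorem directly, restricted to the appropriate quantile mass.
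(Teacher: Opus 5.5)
Your proposal is correct and follows essentially the same route as the paper. Both compute the per-coordinate expected loss exactly at finite $d$: the first block is fully revealed to the sophisticated agent, and only the unhighlighted zeros in the first block contribute $(p^*)^2(1-p^*)$ for the naive agent. Both then write the normalized sums as integrals of the empirical quantile function $Q^*_d$ and pass to the limit. You spell out steps the paper only asserts: quantile convergence holds at continuity points of $Q^*$, hence almost everywhere, and bounded convergence finishes the argument. Your remark that Assumption~\ref{asm:bandwidth} is not needed for this scheme matches the paper's proof.
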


Assuming that the asymptotic bandwidth is not too large, we can find a fraction $\beta$ such that Procedures~\ref{enum:greedy} and \ref{enum:fraction} coincide.

\begin{assumption}[Limited asymptotic bandwidth]
\label{asm:smallbandwidth}
    The limiting bandwidth fraction satisfies
    \[
        \alpha \in \left(0,\ \int_{0}^{1} Q^*(q)\,\de q\right).
    \]
\end{assumption}

\begin{proposition}[Asymptotic equivalence of greedy and fixed surprise highlighting]
\label{prop:equivalence}
Assume that Assumptions~\ref{asm:bandwidth}, \ref{asm:stability}, \ref{asm:smallbandwidth} hold.
Let \(\beta^*\) satisfy
\(
\int_{0}^{\beta^*} Q^*(q)\,\de q = \alpha.
\)
Then the greedy procedure (\autoref{enum:greedy}) and the fixed surprise-highlighting procedure (\autoref{enum:fraction}) with parameter \(\beta^*\) have the same asymptotic normalized risk for both sophisticated and naive agents.
In particular, their asymptotic normalized risks are the expressions in \autoref{prop:fractionasymptotic} evaluated at \(\beta=\beta^*\).
\end{proposition}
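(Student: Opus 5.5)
Our plan is to show that, with probability tending to one and up to $o(d)$ coordinates, the greedy set of \autoref{enum:greedy} coincides with the set of \autoref{enum:fraction} at $\beta^*$. Since every per-coordinate loss is bounded by $1$, both normalized risks then differ by $o(1)$, and the limits follow from \autoref{prop:fractionasymptotic}.

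\textbf{Step 1: law of large numbers.} For fixed $\beta$, let $S_d(\beta):=\sum_{j\le\lfloor\beta d\rfloor}X^*_{d,(j)}$. Independence gives $\Var(S_d(\beta))\le d/4$, so $S_d(\beta)/d-\E[S_d(\beta)]/d\to0$ in probability. We then show $\frac1d\sum_{j\le\lfloor\beta d\rfloor}p^*_{d,(j)}\to\int_0^\beta Q^*(q)\,\de q$. Assumption~\ref{asm:stability} gives weak convergence of the empirical measure of the $p^*_{d,i}$ to $F^*$. Hence the empirical quantile function, $q\mapsto p^*_{d,(\lceil qd\rceil)}$, converges to $Q^*$ at continuity points of $Q^*$. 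Since everything is bounded by $1/2$, dominated convergence finishes the argument. Monotonicity in $\beta$ and continuity of the limit $G(\beta):=\int_0^\beta Q^*$ then upgrade this to uniform convergence in probability over $\beta\in[0,1]$, by a Glivenko--Cantelli-type sandwich on a finite grid.

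\textbf{Step 2: locating the greedy cutoff.} Let $J_d$ be the index of the $k$-th coordinate with $X^*_{d,(j)}=1$, so that greedy highlights exactly the ones among the first $J_d$ coordinates. Assumption~\ref{asm:smallbandwidth} ensures $\alpha<G(1)$, so with probability tending to one there are at least $k$ ones and the footnote case is irrelevant. We need $G$ to be strictly increasing near $\beta^*$, i.e.\ $Q^*>0$ just below $\beta^*$. This holds because $G(\beta^*)=\alpha>0$ and $Q^*$ is nondecreasing, so $Q^*(q)>0$ for $q$ near $\beta^*$ from below. Combining $k/d\to\alpha$ with the uniform convergence of $S_d(\beta)/d$ to $G(\beta)$ then yields $J_d/d\to\beta^*$ in probability: for any $\varepsilon>0$ we have $S_d(\beta^*-\varepsilon)/d<\alpha<S_d(\beta^*+\varepsilon)/d$ with high probability.

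\textbf{Step 3: comparing losses.} The greedy set and the $\beta^*$-set differ only on indices between $\lfloor\beta^*d\rfloor$ and $J_d$. There are $o_p(d)$ such indices, and each contributes at most $1$ to the realized loss, so the naive realized losses differ by $o_p(1)$ after normalization. Since they are also bounded by $1$, the expected risks differ by $o(1)$.

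For the sophisticated agent the policies differ, so the posteriors differ. We therefore bound both sophisticated risks between quantities that share the same limit. The lower bound comes from the fact that a sophisticated agent does at least as well as knowing only revealed values; equivalently, we consider an oracle that learns all of $X^*_{(j)}$ for $j\le J_d$. The upper bound comes from a suboptimal estimator the sophisticated agent can use: the greedy rule reveals that every unrevealed index below $J_d$ is $0$, which gives zero loss on $j<J_d$, and on $j>J_d$ it predicts $p^*$. The $J_d$ itself is inferable from the selection. The resulting loss is $\frac1d\sum_{j>J_d}(X^*_{(j)}-p^*_{(j)})^2$ up to $o_p(1)$, where the $o_p(1)$ accounts for the conditioning on $J_d$ shifting the means beyond $J_d$ only through a few coordinates. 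The oracle lower bound has the same form, so both limits equal $\int_{\beta^*}^1Q^*(1-Q^*)$.

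\textbf{Main obstacle.} The step I expect to be hardest is this sophisticated comparison. Conditioning on the selection event $\{J_d=J\}$ induces dependence on the coordinates beyond $J$ only through the requirement that exactly $k$ ones lie at or before $J$. Coordinates beyond $J_d$ remain independent with their prior marginals, given the event and the revealed ones. I would verify this factorization explicitly; if it holds, the upper bound is exact, not merely up to $o_p(1)$. Uniformity of the LLN in Step 1 is the second delicate point.
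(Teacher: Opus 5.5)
Your proposal is correct and follows the paper's route. The paper likewise applies a Chebyshev law of large numbers to the number of ones among the first $\lfloor \beta d\rfloor$ ordered coordinates, evaluates it at $\beta^*\pm\varepsilon$ to get $J_d/d\to\beta^*$ in probability, and then calls the risk comparison a ``direct consequence.'' Your Step~3 spells that comparison out, and your argument that $Q^*>0$ near $\beta^*$ supplies the strict monotonicity the paper asserts without proof.

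There is one slip in the sophisticated lower bound. That bound comes only from the oracle that knows $X^*_{d,(j)}$ for $j\le J_d$. The fact that sophisticated updating beats naive updating gives an upper bound, so it is not an equivalent formulation. Also, as you suspected, $J_d$ is a stopping time, so the coordinates beyond it are exactly independent of the selection event. Hence the sophisticated greedy loss equals $\frac{1}{d}\sum_{j>J_d}(X^*_{d,(j)}-p^*_{d,(j)})^2$ on the event of at least $k$ ones, with no $o_p(1)$ correction needed.
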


Putting these results together, we obtain an exact characterization of the limiting risks along our asymptotic sequence.

\begin{theorem}[Asymptotic risk of the greedy algorithm]
\label{thm:greedyasymptotic}
Assume that Assumptions~\ref{asm:bandwidth}, \ref{asm:stability}, \ref{asm:smallbandwidth} hold.
Let \(\beta^*\) be as defined in \autoref{prop:equivalence}.

\begin{enumerate}[label=(\roman*)]
\item \emph{Sophisticated agent}: The greedy \autoref{enum:greedy} has asymptotic normalized risk
\[
\lim_{d\to\infty} R^S(\sigma^{\ref{enum:greedy}})
=
\int_{\beta^*}^{1} Q^*(q)\bigl(1-Q^*(q)\bigr)\,\de q.
\]
\item \emph{Naive agent}: The greedy \autoref{enum:greedy} has asymptotic normalized risk
\[
\lim_{d\to\infty} R^N(\sigma^{\ref{enum:greedy}})
=
\int_{0}^{\beta^*} \bigl(Q^*(q)\bigr)^2\bigl(1-Q^*(q)\bigr)\,\de q
+
\int_{\beta^*}^{1} Q^*(q)\bigl(1-Q^*(q)\bigr)\,\de q.
\]
\end{enumerate}
\end{theorem}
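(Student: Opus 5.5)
The theorem follows by combining \autoref{prop:equivalence} with \autoref{prop:fractionasymptotic}, so the plan is mainly to check that those two results fit together as stated.

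First, I will verify that $\beta^*$ is well defined. The map $\beta\mapsto\int_0^\beta Q^*(q)\,\de q$ is continuous and nondecreasing on $[0,1]$. It equals $0$ at $\beta=0$ and $\int_0^1 Q^*(q)\,\de q$ at $\beta=1$. By \autoref{asm:smallbandwidth}, $\alpha$ lies strictly between these two values, so the intermediate value theorem gives some $\beta^*\in(0,1)$ with $\int_0^{\beta^*}Q^*=\alpha$. If $Q^*$ vanishes on an initial interval, $\beta^*$ need not be unique. I will note that every admissible choice yields the same risk expressions, because the integrands $Q^*(1-Q^*)$ and $(Q^*)^2(1-Q^*)$ vanish wherever $Q^*=0$.

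Second, I will invoke \autoref{prop:equivalence}. Under Assumptions~\ref{asm:bandwidth}, \ref{asm:stability} and \ref{asm:smallbandwidth}, the greedy procedure \ref{enum:greedy} and fixed surprise highlighting \ref{enum:fraction} with $\beta=\beta^*$ have the same limiting normalized risk, both for $R^S$ and for $R^N$. Third, since $\beta^*\in(0,1)$, \autoref{prop:fractionasymptotic} applies at $\beta=\beta^*$. Part (i) of that proposition gives the sophisticated limit and part (ii) gives the naive limit. Substituting these limits for the greedy policy yields exactly the two displayed formulas in the theorem.

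The genuine difficulty sits inside \autoref{prop:equivalence}, which is taken as given here. The number of ones among $X^*_{d,(1)},\dots,X^*_{d,(\lfloor\beta^* d\rfloor)}$, divided by $d$, concentrates around $\int_0^{\beta^*}Q^*=\alpha\approx k/d$. This follows from a law of large numbers for independent Bernoulli variables with bounded variances, combined with Riemann-sum convergence under \autoref{asm:stability}. Consequently the greedy set and the fixed-surprise set differ in only $o(d)$ coordinates with high probability. Each of those coordinates changes the loss by at most $1$, so their normalized risks agree in the limit.

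For the sophisticated agent, this comparison needs the extra step that conditioning on the selection event changes the posterior means of only a vanishing fraction of coordinates. Were I to re-derive that step, it would be the main obstacle. At the level of the present theorem, the only point requiring care is that both propositions use the same $\beta^*$ and the same normalization by $d$, and they do.
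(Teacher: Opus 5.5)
Your proposal is correct and follows the paper's proof: the theorem is obtained by combining \autoref{prop:equivalence} with \autoref{prop:fractionasymptotic} evaluated at $\beta=\beta^*$. Two harmless asides. First, $\beta^*$ is in fact unique: $Q^*$ is nondecreasing, so $\beta\mapsto\int_0^\beta Q^*(q)\,\de q$ can be flat only where it equals zero, and the level $\alpha>0$ is attained exactly once. Second, the sophisticated comparison is easier than you suggest: under the greedy rule the agent infers every coordinate up to the last highlighted index exactly and keeps the prior beyond it, just as under scheme~\ref{enum:fraction}.
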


\autoref{fig:exante-vs-contextual} illustrates the qualitative difference between \emph{ex-ante} fixed subset selection and \emph{contextual} (data-driven) highlighting.
We consider a sequence of independent Bernoulli features $X_{d,i}\sim\mathrm{Bern}(p_{d,i})$ whose empirical distribution of means converges to the symmetric triangular density
$f(p)=4p$ for $p\in[0,\tfrac12]$ and $f(p)=4-4p$ for $p\in[\tfrac12,1]$.
Fix $\alpha=\tfrac14$ so that $k= d / 4$.

Under an optimal \emph{ex-ante} fixed policy (\autoref{enum:fixed}), the designer chooses one subset of size $k$ before observing $X_d$; for squared-error recovery, this amounts to prioritizing features with the largest intrinsic uncertainty $p_i(1-p_i)$, i.e., those with $p_i$ closest to $\tfrac12$ (blue region).
In contrast, the optimal \emph{contextual} policy (\autoref{enum:greedy}) conditions on the realized instance and highlights the coordinates with the largest realized deviation $|X_i-p_i|$.
As $d\to\infty$, this ``surprise-driven'' rule concentrates highlighting on extreme means, producing the triangular regions shown in orange.

\begin{figure}[t]
    \centering
    \includegraphics[width=\textwidth]{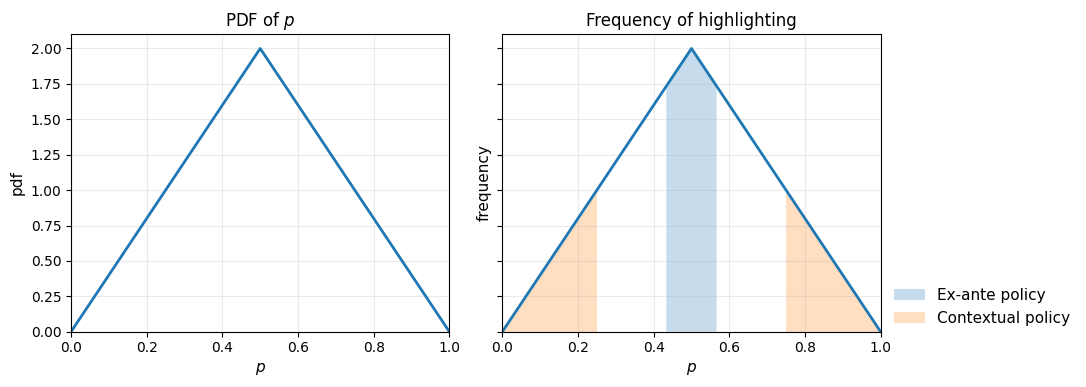}
    \caption{Ex-ante fixed subset selection vs.\ contextual highlighting.
    \emph{Left:} the limiting density of feature means $p$, given by
    $f(p)=4p$ for $p\in[0,\tfrac12]$ and $f(p)=4-4p$ for $p\in[\tfrac12,1]$.
    \emph{Right:} the asymptotic highlighting frequency of a feature with mean $p$ under an optimal contextual (data-driven) highlighting policy with bandwidth fraction $\alpha=0.25$ (i.e., $k=\alpha d$).
    The blue region indicates the ex-ante fixed policy, which prioritizes features with largest variance $p(1-p)$ (hence $p$ near $\tfrac12$).
    The orange triangles indicate the contextual policy, which highlights features with the largest realized surprise $|X_i-p_i|$, leading asymptotically to emphasizing the extreme-$p$ regions.}
    \label{fig:exante-vs-contextual}
\end{figure}

A simpler numerical example is given by the case where all probabilities are the same, $p_{d,i}=p$ for all $d$ and $i$, so that the features are i.i.d.
We assume without loss of generality that $p \leq \frac{1}{2}$.
Then $Q^*(q)=p$ for all $q\in(0,1)$.
By \autoref{prop:fixedasymptotic}, the asymptotic normalized risk of the fixed highlighting scheme equals
\[
\lim_{d\to\infty} R^N(\sigma^{\ref{enum:fixed}})
=
\lim_{d\to\infty} R^S(\sigma^{\ref{enum:fixed}})
=
\int_{0}^{1-\alpha} Q^*(q)\bigl(1-Q^*(q)\bigr)\,\de q = (1-\alpha)\,p(1-p).
\]
By \autoref{prop:equivalence}, $\beta^*$ solves
$\int_{0}^{\beta^*} Q^*(q)\,\de q = \beta^* p = \alpha$, so $\beta^*=\alpha/p$ (and we assume that $\alpha < p$ by \autoref{asm:smallbandwidth}).
The greedy risk has normalized asymptotic risk
\[
\lim_{d\to\infty} R^S(\sigma^{\ref{enum:greedy}})
=
\int_{\beta^*}^{1} Q^*(q)\bigl(1-Q^*(q)\bigr)\,\de q = (1-\beta^*)\,p(1-p) = (p-\alpha)(1-p)
\]
for the sophisticated agent, and asymptotic normalized risk
\begin{align*}
    \lim_{d\to\infty} R^N(\sigma^{\ref{enum:greedy}})
    &=
    \int_{0}^{\beta^*} \bigl(Q^*(q)\bigr)^2\bigl(1-Q^*(q)\bigr)\,\de q + \int_{\beta^*}^{1} Q^*(q)\bigl(1-Q^*(q)\bigr)\,\de q
    \\
    &= \beta^*\,p^2(1-p) + (1-\beta^*)\,p(1-p) = \alpha p(1-p) + (p - \alpha)(1-p)
\end{align*}
for the naive agent.
As $\alpha \rightarrow p$, the sophisticated agent achieves an average risk approaching zero, the naive agent achieves an average risk $p^2 (1-p)$, and a fixed highlighting scheme achieves an average risk $p (1-p)^2$.

\subsection{Partially Correlated Features}\label{sec:correlated}

In the previous subsection, we focused on independent features, where it is optimal for a naive receiver to order features by their marginal information value and reveal as many features as the bandwidth constraint allows.
With correlation, however, the value of highlighting also depends on its implication for unrevealed features.
For a naive receiver who ignores the selection rule, revealing an additional feature can even increase the realized loss in some instances.
The next example shows that, in correlated settings, a naive optimal contextual policy need not be monotone in the amount of information highlighted.
In particular, it may be optimal to withhold information in some cases.

\begin{example}[With correlation, revealing more can hurt a naive agent]\label{ex:correlated-nonmonotone}
Let $X=(X_1,X_2)\in\{0,1\}^2$ with correlated prior
$
\P(0,0)=\P(0,1)=\P(1,0)=\frac13, \P(1,1)=0.
$
The machine observes $X$ and may reveal at most one coordinate (bandwidth $k=1$), i.e., it chooses $I\in\{\emptyset,\{1\},\{2\}\}$ and $(I,X_I)$ is revealed.
The receiver is naive: upon observing $(I,X_I)$, they use the posterior $\P(\cdot\mid X_I)$ (ignoring the selection rule).
Consider squared recovery of the full vector,
$
A=\R^2, \ell(a,x)=\|a-x\|_2^2.
$
At the instance $X=(0,0)$, revealing $X_1=0$ induces the naive posterior mean $\E[X_2\mid X_1=0]=\frac{1}{2}$, hence the naive prediction is $(0,\frac{1}{2})$ and the realized loss is $\frac{1}{4}$.
Symmetrically, revealing $X_2=0$ yields prediction $(\frac{1}{2},0)$ and loss $\frac{1}{4}$.
In contrast, revealing nothing induces the prior-mean prediction $\E[X]=(\frac{1}{3},\frac{1}{3})$ and realized loss $\frac{2}{9}<\frac{1}{4}$.
Thus, when features are correlated, highlighting additional information need not improve performance for a naive receiver, and an optimal contextual policy may strictly prefer to withhold information in some cases.
\end{example}

\autoref{ex:correlated-nonmonotone} also illustrates why the deviation heuristic (\autoref{alg:dev}) and the marginal heuristic (\autoref{alg:marginal}) can be suboptimal under correlation, whereas greedy information gain (\autoref{alg:greedy-ig}) can adapt by withholding information when the best available reveal would increase the realized loss.
In \autoref{ex:correlated-nonmonotone} with $k=1$, the marginal means are $\mu_1=\mu_2=\E[X_i]=1/3$.
\autoref{alg:dev} reveals one coordinate at $X=(0,0)$, and incurs loss $1/4$ there, while achieving zero loss at $X=(0,1)$ and $X=(1,0)$.
Hence
$$
R^N(\sigma^{\mathrm{dev}})=\frac13\cdot\frac14=\frac{1}{12}.
$$
By contrast, greedy information gain with early stopping selects $I=\emptyset$ at $X=(0,0)$ (loss $2/9<1/4$) and reveals the ``$1$'' coordinate in the other two states, achieving zero loss there.
Therefore
$$
R^N(\sigma^{\mathrm{greedy}})=\frac13\cdot\frac{2}{9}=\frac{2}{27}
\;<\;
\frac{1}{12}.
$$

The above examples show cases where 
$\sigma^{\mathrm{greedy}}$ is a better selection rule than other approximate algorithms. However, the following remark indicates that sometimes \autoref{alg:greedy-ig} can perform worse than \autoref{alg:dev} and \autoref{alg:marginal}.

\begin{remark}[\autoref{alg:greedy-ig} can do worse than \autoref{alg:dev} and \autoref{alg:marginal}]\label{rem:greedy-worse}
The early-stopping rule that allows \autoref{alg:greedy-ig} to withhold individually harmful reveals can also cause it to reveal strictly too little when reveals help only jointly. Consider \autoref{ex:correlated-nonmonotone} with bandwidth $k=2$. At $X=(0,0)$, every singleton reveal raises the realized loss: $\Delta(j\mid\emptyset,X)=\tfrac{2}{9}-\tfrac{1}{4}=-\tfrac{1}{36}<0$ for $j\in\{1,2\}$, so \autoref{alg:greedy-ig} stops at $\emptyset$ with loss $2/9$.
Both \autoref{alg:dev} and \autoref{alg:marginal} reveal $\{1,2\}$ at $(0,0)$, which pins down the state and attains loss $0$. At the other two realizations $X=(0,1)$ and $X=(1,0)$, all three algorithms achieve loss $0$. Therefore, in expectation across the three equally likely realizations,
$$
R^N(\sigma^{\mathrm{greedy}}) = \tfrac{1}{3}\cdot\tfrac{2}{9} = \tfrac{2}{27} \;>\; 0 \;=\; R^N(\sigma^{\mathrm{dev}}) = R^N(\sigma^{\mathrm{marg}}),
$$
so the deviation and marginal heuristics strictly dominate greedy information gain in expected naive loss.
\end{remark}

The previous remark relies on \autoref{alg:greedy-ig}'s early-stopping rule. Even without early stopping, \autoref{alg:greedy-ig}'s myopic first-step choice can lock it out of the optimal pair, as the following discrete example shows.

\begin{remark}[\autoref{alg:greedy-ig} without early stopping can be worse than \autoref{alg:dev}]\label{rem:greedy-worse-no-stop}
Let $X=(X_1,X_2,X_3)\in\{0,1\}^3$ with bandwidth $k=2$ and four equally likely realizations
$$
r_1=(0,0,0),\quad r_2=(1,0,0),\quad r_3=(1,1,0),\quad r_4=(0,1,1),
$$
so $\mu=(\tfrac{1}{2},\tfrac{1}{2},\tfrac{1}{4})$. 
At $X=r_2$, 
revealing $X_3=0$ shifts the naive posterior to $\E[X_1\mid X_3=0]=2/3$ and $\E[X_2\mid X_3=0]=1/3$, close to $x_1=1$ and $x_2=0$. \autoref{alg:greedy-ig} therefore picks $X_3$ first; conditional on $\{3\}$, both remaining additions have $\Delta(j\mid\{3\},r_2)=-\tfrac{1}{36}$, so even without early stopping the algorithm completes at $\sigma^{\mathrm{greedy}}(r_2)=\{1,3\}$ with realized loss $\tfrac{1}{4}$. \autoref{alg:marginal} also selects $\{1,3\}$ from the singleton ranking. \autoref{alg:dev}, by contrast, ranks coordinates by $|x-\mu|=(\tfrac{1}{2},\tfrac{1}{2},\tfrac{1}{4})$ and picks $\{1,2\}$, which uniquely identifies $r_2$ and attains loss $0$. At the other three realizations $r_1, r_3, r_4$, all three algorithms attain loss $0$. Therefore, in expectation across the four equally likely realizations,
$$
R^N(\sigma^{\mathrm{greedy}})=R^N(\sigma^{\mathrm{marg}})=\tfrac{1}{4}\cdot\tfrac{1}{4}=\tfrac{1}{16}>0=R^N(\sigma^{\mathrm{dev}}),
$$
so the deviation heuristic strictly dominates both greedy information gain and the marginal heuristic in expected naive loss. 
\end{remark}

\paragraph{Good approximation for the correlation case under a weak-mean-shift assumption.}

A simple sufficient assumption is that conditioning on a small number of revealed coordinates does not substantially change posterior means of unrevealed coordinates.

\begin{assumption}[Weak-mean-shift]\label{asm:weak-corr-mean}
 Assume there exists $\varepsilon\ge 0$ such that for every $i\in[d]$, every set $S\subseteq[d]\setminus\{i\}$ with $|S|\le k$, and every value $x_S$,
\begin{equation}
\left|\E[X_i\mid X_S=x_S]-\E[X_i]\right|\le \varepsilon.
\end{equation}
\end{assumption}
\autoref{asm:weak-corr-mean} says that even after observing up to $k$ coordinates, each remaining coordinate's conditional mean stays within $\varepsilon$ of its unconditional mean.
This holds, for example, under standard weak-dependence or bounded-influence regimes.

\begin{example}[Two sufficient conditions for weak-mean-shift]\label{lem:wc-suff}
Assume $X_i\in[a_i,b_i]$ almost surely and let $R:=\max_i(b_i-a_i)$.
Fix $i$ and $S$ with $|S|\le k$.
\begin{enumerate}[label=(\roman*)]
\item Distributional stability in total variation.
If
$$
\sup_{x_S}\ \mathrm{TV}\!\big(\mathcal{L}(X_i\mid X_S=x_S),\ \mathcal{L}(X_i)\big)\ \le\ \delta,
$$
then $\sup_{x_S}\big|\E[X_i\mid X_S=x_S]-\E[X_i]\big|\le \delta R$.

\item Linear conditional mean with bounded sparse regression.
Suppose there exist coefficients $\beta_{iS}\in\R^{|S|}$ such that
$$
\E[X_i\mid X_S=x_S]=\mu_i+\beta_{iS}^\top(x_S-\mu_S)
\quad\text{for all }x_S,
$$
and $\|\beta_{iS}\|_1\le \eta$.
If additionally $\|x_S-\mu_S\|_\infty\le R$ almost surely, then
$\sup_{x_S}\big|\E[X_i\mid X_S=x_S]-\mu_i\big|\le \eta R$.
\end{enumerate}
\end{example}

 The simple deviation heuristic is close to optimal under the above assumption, provided that the bound $\varepsilon$ is sufficiently small.

\begin{proposition}[Deviation heuristic is near-optimal under weak correlation]\label{prop:dev-weak-corr-general}
Assume $X\in\R^d$ and the objective is squared recovery.
Let $\mu_i=\E[X_i]$.
Assume \autoref{asm:weak-corr-mean} is satisfied.
Assume additionally that $X_i\in[a_i,b_i]$ almost surely, and let $R:=\max_i (b_i-a_i)$.
Let $\sigma^{\mathrm{dev}}$ be the policy from \autoref{alg:dev}, and let $\sigma^N$ be a naive--optimal policy among all policies that reveal at most $k$ coordinates (and may reveal none).
Then
$$
R^N(\sigma^{\mathrm{dev}})
\ \le\
R^N(\sigma^N)\;+\;2(d-k)\cdot(2R\varepsilon+\varepsilon^2).
$$
In particular, when $\varepsilon$ is small, the deviation heuristic is near-optimal up to an additive error that vanishes as $\varepsilon\to 0$.
\end{proposition}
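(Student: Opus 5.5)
We compare the naive realized loss pointwise with a surrogate that ignores correlation. For a realized instance $x$ and any revealed set $I$ with $|I|\le k$, the naive Bayes action under squared recovery reveals $\hat a_i=x_i$ for $i\in I$ and $\hat a_i=m_i^I(x):=\E[X_i\mid X_I=x_I]$ for $i\notin I$. So
$$
L^N(I;x)=\sum_{i\notin I}\bigl(x_i-m_i^I(x)\bigr)^2 .
$$
Define the independence surrogate
$$
\tilde L(I;x):=\sum_{i\notin I}(x_i-\mu_i)^2 .
$$
By construction $\sigma^{\mathrm{dev}}(x)$ maximizes $\sum_{i\in I}(x_i-\mu_i)^2$ over $|I|\le k$. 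Taking $|I|=k$ is weakly better since the terms are nonnegative. Hence $\sigma^{\mathrm{dev}}(x)$ minimizes $\tilde L(\cdot;x)$ over all $|I|\le k$, including the empty set. This is the same argument as in \autoref{lm:contextual-indep}, which only used additivity of $\tilde L$.

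\textbf{Step 1: termwise comparison.} The key step is a uniform bound $|L^N(I;x)-\tilde L(I;x)|\le (d-|I|)(2R\varepsilon+\varepsilon^2)$ for every $I$ with $|I|\le k$. Fix $i\notin I$ and write $x_i-m_i^I=(x_i-\mu_i)-\delta_i$ with $\delta_i:=m_i^I(x)-\mu_i$. \autoref{asm:weak-corr-mean} applies with $S=I$ and gives $|\delta_i|\le\varepsilon$. Expanding,
$$
(x_i-m_i^I)^2-(x_i-\mu_i)^2=-2(x_i-\mu_i)\delta_i+\delta_i^2 .
$$
Both $x_i$ and $\mu_i$ lie in $[a_i,b_i]$, so $|x_i-\mu_i|\le R$ and the difference is at most $2R\varepsilon+\varepsilon^2$ in absolute value. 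Summing over the at most $d$ unrevealed coordinates gives the bound. The claimed constant uses $d-k$, so I would note the bound is clean when $|\sigma(x)|=k$. The deviation policy always reveals exactly $k$ coordinates, which gives the $(d-k)$ factor for it directly.

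\textbf{Step 2: sandwich.} Let $I^N=\sigma^N(x)$ and $I^{\mathrm{dev}}=\sigma^{\mathrm{dev}}(x)$. Then
$$
L^N(I^{\mathrm{dev}};x)\le \tilde L(I^{\mathrm{dev}};x)+(d-k)c\le \tilde L(I^{N};x)+(d-k)c\le L^N(I^N;x)+(d-k)c+(d-|I^N|)c,
$$
where $c=2R\varepsilon+\varepsilon^2$. Taking expectations gives the result up to the second error term.

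\textbf{Main obstacle.} That second term is where I expect trouble: it is $(d-|I^N|)c$, not $(d-k)c$, because the naive optimum may reveal fewer than $k$ features, as in \autoref{ex:correlated-nonmonotone}. I would handle this by padding. Let $J\supseteq I^N$ with $|J|=k$. Then $\tilde L(J;x)\le\tilde L(I^N;x)$, so Step 2 can instead compare $\tilde L(I^{\mathrm{dev}};x)\le\tilde L(J;x)$.

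The step that remains is $\tilde L(J;x)\le L^N(I^N;x)+(d-k)c$. On coordinates outside $J$, the termwise comparison applies with conditioning set $I^N$, giving at most $(d-k)c$. The coordinates in $J\setminus I^N$ contribute nonnegative terms to $L^N(I^N;x)$, so dropping them only helps.

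With this padding the total error is $2(d-k)c$, which matches the stated constant. I would double-check that \autoref{asm:weak-corr-mean} is applied only with conditioning sets of size at most $k$; it is, since $|I^N|\le k$ and $|I^{\mathrm{dev}}|=k$.
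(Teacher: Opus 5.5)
Your proof is correct and follows the same route as the paper's. Both compare the naive loss to the surrogate $\sum_{i\notin I}(x_i-\mu_i)^2$ that the deviation heuristic minimizes, bound each unrevealed coordinate's perturbation by $2R\varepsilon+\varepsilon^2$ via \autoref{asm:weak-corr-mean}, and sandwich the two minimizers. Your padding step ($J\supseteq\sigma^N(x)$ with $|J|=k$, then dropping the nonnegative terms on $J\setminus\sigma^N(x)$) is genuinely needed for the stated constant $2(d-k)$. The paper asserts $|L^N(I;x)-S(I;x)|\le (d-k)(2R\varepsilon+\varepsilon^2)$ for every $|I|\le k$, but this only holds when $|I|=k$, and the naive optimum can reveal fewer than $k$ features, as \autoref{ex:correlated-nonmonotone} shows.
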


\section{Simulation Case Study Based on the American Housing Survey}\label{sec:empirical}

We now illustrate our theoretical results in a calibrated numerical exercise using data from the American Housing Survey (AHS).
Home valuation is a canonical decision with many potentially relevant attributes: a prospective buyer or appraiser cares about structural characteristics, neighborhood quality, and price, yet typically lacks the bandwidth to inspect every detail of every listing.
In such settings, an algorithm can support the decision-maker by highlighting a small subset of attributes that are most informative for getting a realistic impression and assessing the value of each home.

For our specific application, we leverage data from 1,521 homes in the Minneapolis--St.~Paul--Bloomington metro area, model the human's prior as a multivariate Gaussian over 44 housing characteristics, and compare a range of highlighting policies primarily from the perspective of a naive agent, with a supplementary robustness exercise for a stylized sophisticated agent.
The exercise yields three main takeaways that mirror the theory:
(1) contextual highlighting delivers substantial gains over fixed, ex-ante highlighting;
(2) taking covariance into account is important, since there is otherwise substantive redundancy across selected features;
and (3) it can sometimes be optimal to highlight less rather than more, as revealing additional features can increase the loss faced by a naive agent in correlated environments.

\subsection{Setup and Data}

We use the pooled 2021 and 2023 AHS metropolitan samples, restricting to owner-occupied units with an observed self-reported market value (\texttt{MARKETVAL}) in the Minneapolis--St.~Paul--Bloomington metro area ($n=1{,}521$). The state vector has 44 features, including $\log(\texttt{MARKETVAL})$, structural characteristics (unit size, stories, lot size, number of bedrooms, bathrooms, and total rooms), building type, ownership status, heating and cooling systems, condition indicators, and neighborhood quality ratings.
For simplicity, we treat all features as real values.

\subsection{Belief Model and Loss}

We model the state as a multivariate Gaussian over the 44 features, with the prior mean and covariance estimated from the data. The price coordinate is stored as $\log(\texttt{MARKETVAL})$; its prior mean is 12.6725, corresponding to approximately \$319{,}000 in the original dollar scale. When a set $S$ of features is revealed, the naive agent updates to the Gaussian posterior conditional on $X_S$, leaving unrevealed coordinates at their conditional expectations.

We use a loss function that is equivalent to the outcome-targeted recovery loss from \autoref{sec:model}.
Specifically, for $Y = \log(\texttt{MARKETVAL})$, we use the loss function
\begin{equation}\label{eq:ahs-loss-instance}
\ell(\hat{x}, \hat{y}; x, y) = \alpha \, \frac{(\hat{y}-y)^2}{\sigma_y^2} + (1-\alpha) \, \frac{\sum_{j} w_j \, (\hat{x}_j - x_j)^2/\sigma_j^2}{\sum_{j} w_j},
\end{equation}
where $w_j$ are feature-specific importance weights and $\sigma_j^2$ is the sample variance of each variable. The weight on $\log(\texttt{MARKETVAL})$ is fixed exogenously at $\alpha=\tfrac{1}{2}$; the remaining weights are derived from the coefficients of a ridge regression of $\log(\texttt{MARKETVAL})$ on the other features ($R^2=0.51$).
We then evaluate choices by the average sample loss
\begin{equation}\label{eq:ahs-loss}
\mathcal{L}
\;=\;
\frac{1}{n} \sum_{i=1}^{n} \ell(\hat{x}_i, \hat{y}_i; x_i, y_i)
\;=\;
\frac{1}{n} \sum_{i=1}^{n}
\left[
\alpha \, \frac{(\hat{y}_i - y_i)^2}{\sigma_y^2}
+ (1-\alpha) \, \frac{\sum_{j \in \text{obs}(i)} w_j \, (\hat{x}_{ij} - x_{ij})^2/\sigma_j^2}{\sum_{j \in \text{obs}(i)} w_j}
\right].
\end{equation}
By construction, the no-reveal benchmark achieves a loss of exactly one. All results below refer to the main specification in which \texttt{MARKETVAL} is not directly revealable.

\subsection{Highlighting Policies}

We compare several policies, organized along two margins emphasized in the theory: fixed versus contextual highlighting, and covariance-aware versus covariance-ignorant selection.

\begin{itemize}
    \item \emph{Fixed top-$k$ importance.} The simplest ex-ante benchmark. It ignores covariance, ranks features once by loss weight, and reveals the same top-$k$ non-price features for every house.

    \item \emph{Fixed marginal value.} An intermediate ex-ante benchmark that accounts for covariance but not for redundancy among the selected features. For each revealable feature~$j$, it computes the aggregate training loss when \emph{only}~$j$ is revealed, using the full covariance structure to update the posterior over all other features. It then ranks features by marginal loss reduction and selects the top~$k$.

    \item \emph{Fixed forward stepwise.} The main ex-ante covariance-aware benchmark. Before seeing any specific house, it builds a single ordered list of features by greedy forward selection. At each step, it adds the feature that produces the largest average reduction in the training objective given the features already selected, so it naturally avoids selecting redundant features.

    \item \emph{Fixed smart greedy.} The ex-ante analogue of smart greedy. It follows the same forward-stepwise search, but may stop early if the next feature does not improve the aggregate fixed-policy objective. In the current sample, this stopping rule is not binding at the budgets we report, so its performance is numerically identical to forward stepwise.

    \item \emph{Contextual deviation heuristic} (\autoref{alg:dev}). The empirical counterpart of the surprise-based rule. For each realized house, it reveals the $k$ observed non-price features with the largest absolute deviation from their marginal means. The rule uses only first moments and ignores covariance, so it is the fastest contextual benchmark but also the most restrictive informationally.

    \item \emph{Contextual marginal top-$k$} (\autoref{alg:marginal}). This policy reacts to the realized house, but scores features one at a time and ignores overlap among selected features. It is therefore contextual but covariance-ignorant.

    \item \emph{Contextual greedy.} A variant of \autoref{alg:greedy-ig} without early stopping. For each house, it chooses features sequentially. At each step, it selects the feature with the largest one-step reduction in that row's loss and then updates the posterior.

    \item \emph{Smart greedy} (\autoref{alg:greedy-ig}). Identical to contextual greedy except that it may stop early if the next step does not improve the current row loss. At small budgets, it is nearly indistinguishable from contextual greedy. At large budgets, early stopping can improve performance by avoiding features whose Gaussian conditional updates introduce more noise than information (see below).
\end{itemize}

For small bandwidths we also compute exact enumeration benchmarks. On the fixed side, we enumerate all subsets for $k \leq 3$ to obtain the optimal ex-ante subset under the current objective. On the contextual side, we also run exact enumeration (\autoref{alg:enum-naive}) for $k \leq 3$ on the full one-metro sample. This is computationally feasible in the current application and puts the exact contextual benchmark on the same footing as the other policies.

\subsection{Results}

\autoref{tab:ahs-hidden-main} reports the main results, combining all fixed and contextual policies in a single table to facilitate comparison across the two margins.

\begin{table}[h]
\centering
\caption{Global-normalized recovery loss with \texttt{MARKETVAL} hidden.}
\label{tab:ahs-hidden-main}
\begin{tabular}{r cccc ccccc}
\toprule
 & \multicolumn{4}{c}{Fixed (ex-ante)} & \multicolumn{5}{c}{Contextual} \\
 \cmidrule(lr){2-5} \cmidrule(lr){6-10}
$k$ & Top-$k$ & Marg.\ val.\ & Fwd step.\ & Exact & Deviation & Marginal & Greedy & Smart & Exact \\
\midrule
0  & \multicolumn{4}{c}{1.0000} & \multicolumn{5}{c}{1.0000}  \\
1  & 0.8596 & 0.7616 & 0.7616 & 0.7616 & 0.9291 & 0.3917 & 0.3917 & 0.3917 & 0.3917 \\
2  & 0.6717 & 0.7169 & 0.6484 & 0.6484 & 0.6929 & 0.3479 & 0.2920 & 0.2920 & 0.2793 \\
3  & 0.6025 & 0.6908 & 0.5754 & 0.5754 & 0.6317 & 0.3423 & 0.2494 & 0.2494 & 0.2278 \\
4  & 0.5356 & 0.6822 & 0.5170 &        & 0.5881 & 0.3384 & 0.2233 & 0.2233 &  \\
5  & 0.5232 & 0.6312 & 0.4729 &        & 0.5481 & 0.3333 & 0.2048 & 0.2048 &  \\
8  & 0.4637 & 0.5047 & 0.4119 &        & 0.4927 & 0.3124 & 0.1725 & 0.1725 &  \\
10 & 0.4401 & 0.4296 & 0.3851 &        & 0.4596 & 0.3020 & 0.1603 & 0.1603 &  \\
\midrule
\multicolumn{3}{l}{All (43)} & 0.2363 &        & 0.2363 &  0.2363   & 0.2363 & 0.1322\rlap{$^\ddagger$} &        \\
\bottomrule
\end{tabular}

\vspace{0.5em}
\begin{minipage}{0.94\linewidth}
\footnotesize
\emph{Notes.} Lower values are better. The loss is the globally normalized weighted L2 recovery criterion in \eqref{eq:ahs-loss}, so the no-reveal benchmark equals one by construction. The sample contains 1,521 housing units in the Minneapolis--St.~Paul--Bloomington metro area with observed \texttt{MARKETVAL}. The state and loss use $\log(\texttt{MARKETVAL})$; dollar-scale price errors are reported only as diagnostics. The value coordinate is not directly revealable. ``Marg.\ val.''\ ranks features by the loss reduction from revealing each feature alone (covariance-aware but ignores redundancy). ``Fwd step.''\ is greedy forward selection conditioning on features already chosen. Fixed smart greedy is omitted because it coincides numerically with forward stepwise at the reported budgets. ``Deviation'' under contextual reveals features with the largest realized absolute deviations from their marginal means. ``Marginal'' under contextual denotes contextual marginal top-$k$ (covariance-ignorant but posterior-aware for singletons). ``Smart'' is contextual greedy with per-house early stopping. Fixed ``Exact'' and contextual ``Exact'' enumerate all subsets for $k\le3$ on the same full sample.
\\
${}^\ddagger$Smart greedy with budget $k=43$ stops at a median of 25 features per house.
\end{minipage}
\end{table}

\paragraph{Fixed versus contextual.}
Contextual policies uniformly outperform their fixed-policy counterparts at every budget we consider, and the gap is quantitatively large. At $k=1$, for example, fixed forward stepwise attains a loss of 0.7616, while contextual greedy attains 0.3917. 
The gap widens as the budget increases: at $k=10$, contextual greedy attains 0.1603 against 0.3851 for fixed forward stepwise. The advantage of contextual over fixed highlighting is therefore first-order in this realistic data environment.

\paragraph{The role of correlation in fixed policies.}
Among fixed policies, the comparison between marginal value and forward stepwise isolates the cost of ignoring correlation. 
Both policies use the full covariance structure to evaluate each feature's informativeness, but marginal value evaluates features one at a time while forward stepwise conditions on the features already selected. At $k=1$ the two coincide, both selecting \texttt{UNITSIZE}, the most informative single feature. From $k=2$ onward, however, marginal value stacks highly correlated size and room-count variables (\texttt{UNITSIZE}, \texttt{BATHROOMS}, \texttt{TOTROOMS}, \texttt{BEDROOMS}, \texttt{UNITFLOORS}), while forward stepwise diversifies quickly across distinct informational dimensions (\texttt{UNITSIZE}, \texttt{BLD}, \texttt{OWNLOT}). 
This pattern demonstrates that accounting for covariance in the evaluation of each individual feature is not sufficient for good fixed-policy performance; one must also account for redundancy among the features jointly selected.

\paragraph{Covariance awareness on both margins.}
Among fixed policies, forward stepwise uniformly improves on the covariance-ignorant top-$k$ ranking. The top-$k$ rule begins with highly weighted variables such as building type, bathrooms, and unit size, whereas the covariance-aware ordering diversifies more quickly. Similarly, among contextual policies, the simple deviation heuristic performs worst because it uses only first moments: its loss is 0.9291 at $k=1$ and 0.6317 at $k=3$, far above even contextual marginal top-$k$. The covariance-ignorant but posterior-aware marginal top-$k$ improves substantially on deviation, and contextual greedy performs best among the scalable contextual rules.

\paragraph{Non-monotonicity and the full-reveal benchmark.}
The bottom row of \autoref{tab:ahs-hidden-main} reports the loss when all 43 non-price features are revealed. This full-reveal benchmark attains a loss of 0.2363, which is the irreducible error from the \texttt{MARKETVAL} component that cannot be perfectly predicted from the other features. Notably, contextual greedy at $k=10$ already achieves 0.1603, substantially \emph{below} full reveal. Smart greedy with a budget of $k=43$ achieves 0.1322 by stopping at a median of 25 features per house, while contextual greedy without early stopping at $k=43$ matches the full-reveal loss of 0.2363. The explanation is that revealing low-weight, weakly correlated features can degrade the Gaussian posterior estimate of \texttt{MARKETVAL} by injecting noise through the covariance update. Early stopping avoids this by withholding features whose marginal effect on the posterior is harmful. This is the empirical counterpart of the non-monotonicity result in \autoref{ex:correlated-nonmonotone}: even in a high-dimensional, realistic setting, revealing more information need not improve outcomes.

\paragraph{Exact small-$k$ benchmarks.}
The exact enumeration results confirm that the scalable heuristics are directionally accurate. On the fixed side, exact ex-ante enumeration coincides with fixed forward stepwise for $k \leq 3$, selecting \{\texttt{UNITSIZE}\} at $k=1$, \{\texttt{BLD}, \texttt{UNITSIZE}\} at $k=2$, and \{\texttt{BLD}, \texttt{UNITSIZE}, \texttt{OWNLOT}\} at $k=3$. On the contextual side, the exact oracle equals greedy at $k=1$ and improves on contextual greedy at $k=2$ and $k=3$: exact contextual loss is 0.2793 versus 0.2920 for contextual greedy at $k=2$, and 0.2278 versus 0.2494 at $k=3$. This suggests that the sequential greedy rule captures much of the value of contextual adaptation while remaining computationally tractable.

\paragraph{Sophisticated-agent performance and robustness.}
We also report a stylized robustness exercise for sophisticated updating at $k\leq 3$. The naive benchmark remains the analytical Gaussian posterior throughout. For fixed policies, sophisticated and naive losses coincide because the selected set is constant ex ante. For contextual policies, we simulate sophisticated belief updates based on a simulation from the prior. Specifically, we take 100{,}000 draws from the fitted Gaussian prior, with each revealable synthetic non-price coordinate snapped to the nearest observed AHS support code before policy evaluation and conditioning.
Within this sample, we group support draws by the realized selected set and the displayed revealed covariates, and take the empirical conditional mean within that support sample. This procedure presents a way of calculating posteriors based on feature values \emph{and} feature selection.

\autoref{tab:ahs-soph-real} reports the performance of the resulting choices on the actual AHS sample, while \autoref{tab:ahs-soph-same} and \autoref{tab:ahs-soph-independent} report two synthetic validations. Under the correct model and exact conditioning, a sophisticated agent should weakly outperform a naive agent. The synthetic benchmarks line up with that logic, while the real-data benchmark remains slightly worse, which we interpret as evidence that the empirical selection-aware posterior is sensitive to misspecification when the true housing distribution deviates from the fitted Gaussian approximation. Furthermore, overfitting may deteriorate the performance of the more complex updating rule.

\begin{table}[p]
\centering
\caption{Performance of the sophisticated agent relative to the naive agent across simulation settings. Lower values are better.}
\label{tab:ahs-soph}

\begin{subtable}{\textwidth}
\centering
\caption{Real AHS sample with \texttt{MARKETVAL} hidden.}
\label{tab:ahs-soph-real}
\resizebox{\textwidth}{!}{%
\begin{tabular}{l cccc ccccc cccc}
\toprule
 & \multicolumn{4}{c}{Fixed (same for naive and sophisticated)} & \multicolumn{5}{c}{Contextual naive} & \multicolumn{4}{c}{Contextual sophisticated} \\
 \cmidrule(lr){2-5} \cmidrule(lr){6-10} \cmidrule(lr){11-14}
$k$ & Top-$k$ & Marg.\ val.\ & Fwd step.\ & Exact & Deviation & Marginal & Greedy & Smart & Exact & Deviation & Marginal & Greedy & Smart \\
\midrule
0 & 1.0000 & 1.0000 & 1.0000 & 1.0000 & 1.0000 & 1.0000 & 1.0000 & 1.0000 & 1.0000 & 1.0000 & 1.0000 & 1.0000 & 1.0000 \\
1 & 0.8596 & 0.7616 & 0.7616 & 0.7616 & 0.9291 & 0.3917 & 0.3917 & 0.3917 & 0.3917 & 0.9407 & 0.4393 & 0.4393 & 0.4393 \\
2 & 0.6717 & 0.7169 & 0.6484 & 0.6484 & 0.6929 & 0.3479 & 0.2920 & 0.2920 & 0.2793 & 0.7211 & 0.3665 & 0.2956 & 0.2956 \\
3 & 0.6025 & 0.6908 & 0.5754 & 0.5754 & 0.6317 & 0.3423 & 0.2494 & 0.2494 & 0.2278 & 0.7044 & 0.3637 & 0.2637 & 0.2637 \\
\bottomrule
\end{tabular}%
}
\begin{minipage}{0.96\linewidth}
\footnotesize
\emph{Notes.} Fixed columns are identical for naive and sophisticated agents. Contextual sophisticated columns use an empirical conditional mean built from 100{,}000 Gaussian support draws, conditioning on the realized selected set and the displayed revealed signal codes after snapping synthetic support draws to the empirical AHS signal alphabet.
\end{minipage}
\end{subtable}

\vspace{1em}

\begin{subtable}{\textwidth}
\centering
\caption{Same 100{,}000 Gaussian draws used for support and evaluation.}
\label{tab:ahs-soph-same}
\resizebox{\textwidth}{!}{%
\begin{tabular}{l cccc ccccc cccc}
\toprule
 & \multicolumn{4}{c}{Fixed (same for naive and sophisticated)} & \multicolumn{5}{c}{Contextual naive} & \multicolumn{4}{c}{Contextual sophisticated} \\
 \cmidrule(lr){2-5} \cmidrule(lr){6-10} \cmidrule(lr){11-14}
$k$ & Top-$k$ & Marg.\ val.\ & Fwd step.\ & Exact & Deviation & Marginal & Greedy & Smart & Exact & Deviation & Marginal & Greedy & Smart \\
\midrule
0 & 0.9023 & 0.9023 & 0.9023 & 0.9023 & 0.9023 & 0.9023 & 0.9023 & 0.9023 &  & 0.9023 & 0.9023 & 0.9023 & 0.9023 \\
1 & 0.7879 & 0.6862 & 0.6862 & 0.6862 & 0.8588 & 0.4221 & 0.4221 & 0.4221 &  & 0.8505 & 0.3407 & 0.3407 & 0.3407 \\
2 & 0.6182 & 0.6412 & 0.5920 & 0.5920 & 0.7194 & 0.3831 & 0.3233 & 0.3233 &  & 0.6751 & 0.2848 & 0.2140 & 0.2140 \\
3 & 0.5470 & 0.6166 & 0.5936 & 0.5936 & 0.6296 & 0.3723 & 0.2750 & 0.2750 &  & 0.4654 & 0.2336 & 0.1070 & 0.1070 \\
\bottomrule
\end{tabular}%
}
\begin{minipage}{0.96\linewidth}
\footnotesize
\emph{Notes.} Support and evaluation samples coincide, but the revealed synthetic signals are snapped to the empirical AHS support values, so this benchmark is no longer degenerate. Exact contextual columns are omitted because the exact contextual oracle is not recomputed on the synthetic samples.
\end{minipage}
\end{subtable}

\vspace{1em}

\begin{subtable}{\textwidth}
\centering
\caption{Independent 10{,}000-draw Gaussian evaluation sample.}
\label{tab:ahs-soph-independent}
\resizebox{\textwidth}{!}{%
\begin{tabular}{l cccc ccccc cccc}
\toprule
 & \multicolumn{4}{c}{Fixed (same for naive and sophisticated)} & \multicolumn{5}{c}{Contextual naive} & \multicolumn{4}{c}{Contextual sophisticated} \\
 \cmidrule(lr){2-5} \cmidrule(lr){6-10} \cmidrule(lr){11-14}
$k$ & Top-$k$ & Marg.\ val.\ & Fwd step.\ & Exact & Deviation & Marginal & Greedy & Smart & Exact & Deviation & Marginal & Greedy & Smart \\
\midrule
0 & 0.8899 & 0.8899 & 0.8899 & 0.8899 & 0.8899 & 0.8899 & 0.8899 & 0.8899 &  & 0.8899 & 0.8899 & 0.8899 & 0.8899 \\
1 & 0.7773 & 0.6739 & 0.6739 & 0.6739 & 0.8462 & 0.4170 & 0.4170 & 0.4170 &  & 0.8397 & 0.3389 & 0.3389 & 0.3389 \\
2 & 0.6142 & 0.6332 & 0.5788 & 0.5788 & 0.7116 & 0.3785 & 0.3191 & 0.3191 &  & 0.6985 & 0.2971 & 0.2382 & 0.2382 \\
3 & 0.5382 & 0.6092 & 0.5804 & 0.5804 & 0.6252 & 0.3681 & 0.2710 & 0.2710 &  & 0.6684 & 0.2992 & 0.2293 & 0.2293 \\
\bottomrule
\end{tabular}%
}
\begin{minipage}{0.96\linewidth}
\footnotesize
\emph{Notes.} The support sample remains the same 100{,}000 discretized Gaussian draws, but evaluation is on an independent synthetic sample. The sophisticated gains shrink relative to \autoref{tab:ahs-soph-same} because the empirical conditioning partition thins out of sample.
\end{minipage}
\end{subtable}

\end{table}

These tables support a narrow but useful conclusion. Once the sophisticated benchmark is put on the same discrete signal alphabet as the AHS data, the synthetic exercises line up with the theory: sophisticated updating improves on naive updating for the contextual policies, with the strongest gains when support and evaluation are drawn from the same discretized Gaussian environment. The deviation heuristic benefits from sophisticated updating as well in the same-sample synthetic benchmark, but its gains are much smaller than those of the posterior-based contextual rules and they disappear out of sample at $k=3$. On the real AHS sample, the empirical sophisticated proxy remains slightly worse than the Gaussian naive benchmark for all contextual policies, including deviation. We read that gap as evidence of misspecification of the Gaussian state model rather than as a substantive challenge to the theoretical dominance logic.
However, it points to the practical challenge that results are heavily driven by the exact way beliefs are updated.

\section{Extensions}\label{sec:extensions}

We discuss several extensions of our framework.
First, we comment on continuous instance spaces, where the set of possible states is infinite.
Second, we discuss alternative behavioral updating rules that interpolate between naive and sophisticated belief formation.
Third, we incorporate private human information, which affects belief formation.
Finally, we mention wrong belief and misaligned preferences as important extensions for future research.

\subsection{Continuous Features and Gaussian Highlighting}\label{sec:continuous}

Our main results focus on discrete distributions, where a highlighting policy maps each state in a finite support to a subset of coordinates.
In many applications, however, $X$ is continuous, and the posteriors induced by a policy can be even more complex.
In such settings, the designer's problem for a sophisticated receiver becomes a continuous-state information design problem.
A simple illustration is a Gaussian recovery problem.
Let $X=(X_1,X_2)\sim \mathcal N(0,I_2)$ and suppose the target is $y(X)=X\in\R^2$ with squared loss.
Let $k=1$ and restrict attention to policies that reveal exactly one coordinate value.
For a \emph{naive} agent, revealing coordinate $j$ sets $\hat X_j=X_j$ and leaves the other coordinate at its prior mean $0$.
The optimal naive policy is therefore to reveal the coordinate with larger magnitude,
i.e., $\sigma^N(x_1,x_2)\in \argmax_{j\in\{1,2\}}|x_j|$, yielding the expected loss
\[
R^N(\sigma^N)
=\E\!\left[\min\{X_1^2,X_2^2\}\right]
\approx 0.36.
\]
For a \emph{sophisticated} agent, however, the choice of which coordinate is revealed can itself convey information about the other coordinate.
Any such policy can be represented by two prediction functions $h,g:\R\to\R$:
if the policy reveals $X_1$, the receiver sets $(\hat X_1,\hat X_2)=(X_1,h(X_1))$;
if it reveals $X_2$, it sets $(\hat X_1,\hat X_2)=(g(X_2),X_2)$.
Thus the designer's objective can be transformed to finding $(h,g)$ to minimize
\[
J(h,g)\;=\;\mathbb{E}\!\left[\min\Big\{(X_2-h(X_1))^2,\ (X_1-g(X_2))^2\Big\}\right],
\qquad (X_1,X_2)\sim\mathcal N(0,I_2).
\]
The induced highlighting rule is
\[
\sigma(x_1,x_2)\;=\;
\begin{cases}
\{1\}, & \text{if } (x_2-h(x_1))^2 \le (x_1-g(x_2))^2, \\
\{2\}, & \text{otherwise}.
\end{cases}
\]
Even in this two-dimensional case, optimizing over $(h,g)$ is nontrivial.
\autoref{fig:naive-vs-complex} contrasts the naive partition induced by $\sigma^N$ with a more intricate partition induced by a numerically optimized pair $(h,g)$ (computed via a Lloyd-style best-response iteration), which attains expected loss
\[
R^S(\sigma^{\text{complex}})\approx 0.28 < 0.36 = R^S(\sigma^N)\,
\]
a reduction of about $22\%$ relative to $\sigma^N$.
This example illustrates that in continuous settings, sophisticated receivers turn the highlighting problem into a continuous information-design problem, and optimal policies may exhibit complex state-dependent regions even with just two dimensions and a highly symmetrical prior.

\begin{figure}[h]
  \centering
  \includegraphics[width=0.8 \linewidth]{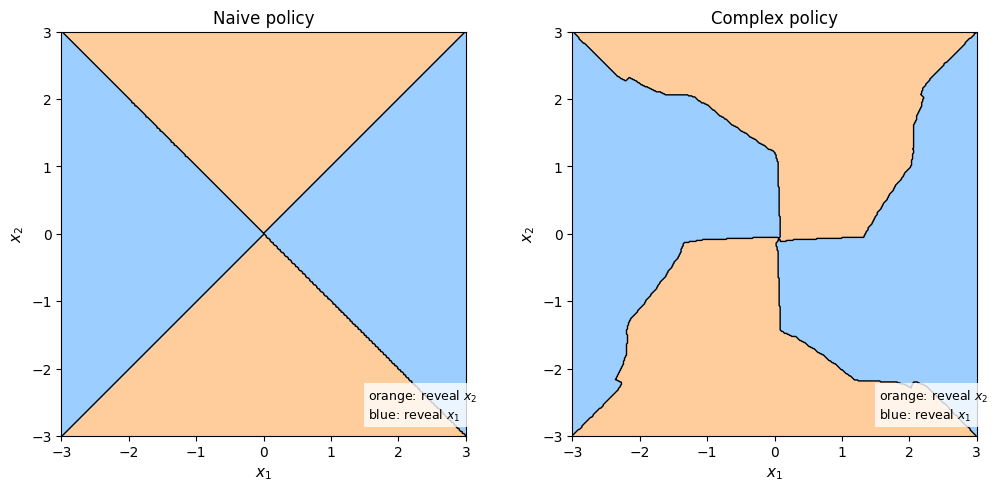}
  \caption{Naive vs.\ complex highlighting in the Gaussian recovery problem with $k=1$.
  The naive policy is optimal for an agent updating naively, while the complex policy is a numerical approximation of the optimal policy for a sophisticated agent.
  }
  \label{fig:naive-vs-complex}
\end{figure}

\subsection{Alternative Behavioral Updating and Stereotypes}

Our main model considers two benchmark belief-formation rules: sophisticated updating, which conditions on the selection event, and naive updating, which ignores it.
In practice, both assumptions are natural, yet extreme benchmarks.
Real decision-maker updating may also be influenced by anchoring, availability, and representativeness.
As a specific behavioral deviation, we consider \emph{stereotypical updating} \citep[e.g.][]{bordalo2016stereotypes}, where the agent replaces Bayesian conditioning by finding a representative instance.
One stylized version is the following: after observing $(I,X_I)$, the agent selects a single ``representative''
\[
\hat X \in \argmax_{x\in\supp(\P):\ \ x_I= X_I}\ \P(X = x \mid \ X_I),
\]
and then plays the action that is optimal for $\hat X$.
Under squared recovery loss, this corresponds to imputing all missing coordinates by the most likely completion rather than the posterior mean.
A detailed study of optimal highlighting against such behavioral rules is an interesting direction for future work.

\subsection{Private Human Information and a Minimax Property}
\label{sec:private-info}

Above, we have assumed that a decision instance is fully characterized by the vector $x$ that is available to the algorithm.
In practice, the human decision-maker may have additional private information.
We now allow the human to observe such private information $X_H$ in addition to the algorithm's message $(I,X_I)$.
The posterior definitions in \autoref{sec:model} extend directly:
\begin{align*}
\hat{\P}^{S}(\cdot\mid X_H,I,X_I)
&=\P(\cdot\mid X_H,X_I,\sigma(X_M)=I),\\
\hat{\P}^{N}(\cdot\mid X_H,I,X_I)
&=\P(\cdot\mid X_H,X_I).
\end{align*}
The design objective remains to minimize expected loss, except that the Bayes action is now computed under posteriors conditional on $X_H$.

Private information also provides an additional motivation for robust design.
If the algorithm is uncertain about what $X_H$ the human has access to, a natural robust objective is to choose a policy that performs well uniformly over possible private information structures.
Because additional private information can only reduce a Bayesian receiver's expected squared error, the worst case for the designer is the case in which the human observes nothing beyond the highlighted features.
Consequently, finding the policy that minimizes expected loss under the assumption that $X_H=\emptyset$ may itself be viewed as a robust solution to the problem with private information.

Formally, fix a jointly distributed model for $(X_M,X_H)$ with $X_M\in\R^d$ the machine features available to the designer and $X_H$ an additional square-integrable signal that the human may or may not observe.
Write $\mathcal H$ for the collection of such private information structures; the degenerate element $\emptyset\in\mathcal H$ corresponds to a human with no private information.
For a highlighting policy $\sigma$ with $|\sigma(x)|\le k$ and a receiver type $\tau\in\{S,N\}$, let $\hat X_M^\tau(\sigma;X_I,X_H)$ denote the induced Bayes action for reconstructing $X_M$ (note that $X_H$ is always perfectly available, so does not add loss), and define the risk with private information $H$ by
\[
R^\tau(\sigma;H)\ :=\ \E\!\left[\bigl\|X_M-\hat X_M^\tau(\sigma;X_I,X_H)\bigr\|_2^2\right],\qquad I=\sigma(X_M),
\]
with $R^\tau(\sigma;\emptyset)$ the risk defined in \autoref{sec:model}.
Throughout what follows, the revealed vector $X_I$ is understood to implicitly carry its index tag $I$, so that $\Sigma(X_I)=\Sigma(I,X_I)$ is a well-defined sub-$\sigma$-algebra of the underlying probability space.

\begin{proposition}[No private information policies are minimax for the sophisticated agent]\label{prop:minimax-private}
For any highlighting policy $\sigma$ and any private information structure $H\in\mathcal H$,
\[
R^S(\sigma;H)\ \le\ R^S(\sigma;\emptyset).
\]
Consequently, any policy that is optimal in the no private information problem,
\[
\sigma^\star\ \in\ \argmin_{\sigma:\,|\sigma(x)|\le k}\ R^S(\sigma;\emptyset),
\]
is also minimax-optimal over private information structures,
\[
\sigma^\star\ \in\ \argmin_{\sigma:\,|\sigma(x)|\le k}\ \sup_{H\in\mathcal H}\ R^S(\sigma;H).
\]
\end{proposition}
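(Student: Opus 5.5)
The plan is to view both risks as minimum mean-squared errors over nested information sets, and then use the fact that conditional expectation is the $L^2$ projection.

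Fix $\sigma$ and $H$, and let $I=\sigma(X_M)$. Set $\mathcal G_0:=\Sigma(X_I)$, which already contains the index tag $I$. Let $\mathcal G_H:=\Sigma(X_I,X_H)$. Clearly $\mathcal G_0\subseteq\mathcal G_H$.

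The sophisticated agent conditions on the selection event. Hence, under squared loss, its Bayes action is the posterior mean under the joint law of $(X_M,X_H)$ together with the induced message:
\[
\hat X^S_M(\sigma;X_I,X_H)=\E[X_M\mid \mathcal G_H],\qquad \hat X^S_M(\sigma;X_I)=\E[X_M\mid\mathcal G_0].
\]
Here I use that conditioning on $(I,X_I)$ under the law induced by $(\P,\sigma)$ is exactly $\P(\cdot\mid X_I,\sigma(X_M)=I)$. Square-integrability of $X_M$ follows from the setup, since the risks are finite; I will state this assumption explicitly.

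The main step is the projection inequality. The variable $\E[X_M\mid\mathcal G_0]$ is $\mathcal G_0$-measurable and therefore $\mathcal G_H$-measurable, so it is a feasible $\mathcal G_H$-measurable estimator. The conditional expectation $\E[X_M\mid\mathcal G_H]$ minimizes $\E\|X_M-Z\|_2^2$ over all square-integrable $\mathcal G_H$-measurable $Z$, coordinatewise. It follows that
\[
R^S(\sigma;H)=\E\|X_M-\E[X_M\mid\mathcal G_H]\|_2^2\le \E\|X_M-\E[X_M\mid\mathcal G_0]\|_2^2=R^S(\sigma;\emptyset).
\]
Equivalently, this is the Pythagorean identity $\E\|X_M-\E[X_M\mid\mathcal G_0]\|^2=\E\|X_M-\E[X_M\mid\mathcal G_H]\|^2+\E\|\E[X_M\mid\mathcal G_H]-\E[X_M\mid\mathcal G_0]\|^2$.

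The minimax claim then follows. Since $\emptyset\in\mathcal H$, the bound just proved gives $\sup_{H}R^S(\sigma;H)=R^S(\sigma;\emptyset)$ for every $\sigma$. The two objectives therefore coincide as functions of $\sigma$, and so do their argmin sets.

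The only delicate point is the measurability bookkeeping. The message $(I,X_I)$ takes values in a disjoint union of spaces of varying dimension, so $\Sigma(X_I)$ must be defined as the $\sigma$-algebra generated by the tagged message, which is what the paper's convention provides. Once that is granted, the inequality is a routine projection argument. I note that the argument does not apply to the naive agent, because its action is not a conditional expectation with respect to the true information set.
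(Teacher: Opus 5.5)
Your proof is correct and follows essentially the same route as the paper: the nested $\sigma$-algebras $\Sigma(X_I)\subseteq\Sigma(X_I,X_H)$, the sophisticated action as a conditional expectation, the $L^2$-projection/Pythagoras inequality (which the paper derives explicitly via the tower property), and $\emptyset\in\mathcal H$ to get $\sup_{H}R^S(\sigma;H)=R^S(\sigma;\emptyset)$ for every $\sigma$. One small fix: finite risks do not by themselves imply $\E\|X_M\|_2^2<\infty$ (a heavy-tailed coordinate that is always revealed contributes zero residual), so impose square-integrability of $X_M$ as an explicit assumption, as you offer to and as the paper does implicitly.
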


The above minimax proposition implies that optimizing the highlighting policy for a sophisticated human with no private information delivers a robust solution against arbitrary private information channels the designer may not observe or control.
However, the result does not carry over to the naive agent.
For the sophisticated agent, the posterior mean is the orthogonal projection of $X_M$ onto the full information $\sigma$-field generated by the private signal and the highlighting rule, so adding information can only reduce squared loss.
The naive agent ignores the informational content of the selection event, so its action is generally not the conditional expectation with respect to that richer $\sigma$-field.
As a result, the projection argument behind \autoref{prop:minimax-private} breaks down for naive updating.

\subsection{Wrong Beliefs and Misaligned Preferences}
\label{sec:inefficiencies}

The above model assumes that both sophisticated and naive agents have a prior of the world that coincides with the actual distribution.
Furthermore, principal (designer) and agent (decision-maker) choices are fully aligned.
An important natural extension is to a model where both priors and preferences can differ between principal and agent.
Specifically, the agent may start with a prior $\tilde{\P}$ and minimize expected loss for a loss function $\tilde{\ell}$. The principal may then still optimize for a highlighting policy for their prior $\P$ and loss function $\ell$, taking both frictions into account and connecting the model further to Bayesian persuasion \citep{kamenica2011bayesian}.
Highlighting may be an attractive way of collaboration in this case since merely revealing rather than manipulating features may still lead to reasonable solutions even in a world where principal and agent have partially misaligned preferences or different beliefs and may not fully trust each other.

\section{Conclusion}\label{sec:conclu}

We consider a model of highlighting in human--AI collaboration where an algorithm reveals a subset of features to enable a decision-maker to make better decisions about a given problem instance.
Our main takeaway is that the way in which the decision-maker interprets the provided information matters for how to design good algorithms.
When the decision-maker interprets the values of the revealed features in isolation, simple algorithms perform close to optimally.
However, when the decision-maker also interprets \emph{why} information was revealed, optimal algorithms are more complex and may confuse naive decision-makers.
Beyond these specific assumptions about how information is interpreted, our approach points to an important agenda of modeling how decision-makers interpret information and deriving tractable highlighting algorithms as a practical tool to achieve human--AI complementarity.

\bibliography{reference}

\appendix

\section*{Proofs}
\begin{proof}[Proof of \autoref{thm:nphard-sophisticated-opt}]
We prove NP-hardness by a polynomial-time reduction from the Euclidean sum-of-squares clustering
decision problem with $2$ clusters, which is NP-hard by \cite{aloise2009np}.

\paragraph{Source problem (Euclidean $2$-means).}
An instance consists of points $z_1,\dots,z_m\in\mathbb R^p$ and a threshold $T\in\mathbb R_{\ge 0}$.
The question is whether there exists a partition $(C_1,C_2)$ of $[m]$ such that
$$
\sum_{t=1}^2\ \sum_{i\in C_t}\|z_i-\mu_t\|_2^2 \ \le\ T,
\qquad\text{where}\qquad
\mu_t:=\frac{1}{|C_t|}\sum_{i\in C_t} z_i .
$$

\paragraph{Reduction outline.}
Given $(z_1,\dots,z_m,T)$, we construct in polynomial time a highlighting instance $(\mathbb P,A,\ell)$
(with $X_M\subseteq\{0,1\}^d$ and $k=1$) and a scalar $\widetilde T:=T/n$ such that
$$
\min_{\sigma} R^S(\sigma)\ \le\ \widetilde T
\quad\Longleftrightarrow\quad
\text{the $2$-means instance has optimum }\le T.
$$
Therefore, if we could compute $\min_\sigma R^S(\sigma)$, we could solve the NP-hard $2$-means decision
problem by comparing the returned optimum to $\widetilde T$. Hence computing $\min_\sigma R^S(\sigma)$
(or an optimizer) is NP-hard.

\paragraph{Construction.}
Fix a $2$-means instance $(z_1,\dots,z_m,T)$.

\textbf{Step 1: choose the feature dimension.}
Let
$$
d := \max\{6,\ 2+\lceil \log_2 m\rceil\},
$$
so that $2^{d-2}\ge m$.

\textbf{Step 2: define the state space and prior.}
Assume no private human information ($X_H=\emptyset$). The machine-observed instance is binary:
$X_M\in\{0,1\}^d$.

\emph{(i) Data states.}
Let $\mathrm{enc}:[m]\to\{0,1\}^{d-2}$ be any injective map. For each $i\in[m]$, define
$$
y^i := (0,0,\mathrm{enc}(i))\in\{0,1\}^d.
$$

\emph{(ii) Gadget states (one per non-cluster signal).}
Under bandwidth $k=1$, a deterministic policy can realize signals of the form $s=(I,x_I)$ where
$I\in\{\emptyset,\{1\},\dots,\{d\}\}$ and $x_I\in\{0,1\}^{|I|}$. Thus there are $2d+1$ possible signals:
$$
\mathcal S = \{\emptyset\}\ \cup\ \{(\{j\},0),(\{j\},1): j\in[d]\}.
$$
Reserve the two ``cluster signals'' $(\{1\},0)$ and $(\{2\},0)$ for data states, and let
$$
\mathcal S_0 := \mathcal S\setminus\{(\{1\},0),(\{2\},0)\},
\qquad\text{so}\qquad |\mathcal S_0|=2d-1.
$$
Index $\mathcal S_0$ as $\mathcal S_0=\{s^0,s^1,\dots,s^{2d-2}\}$.

For each $t\in\{0,1,\dots,2d-2\}$, define a distinct gadget state $x^t\in\{0,1\}^d$ satisfying:
\begin{enumerate}
\item $x^t_1=x^t_2=1$ (so $x^t$ can never realize $(\{1\},0)$ or $(\{2\},0)$);
\item $x^t$ can realize its designated signal $s^t$ (i.e., if $s^t=\emptyset$ choose $I=\emptyset$; if
$s^t=(\{j\},b)$ then ensure $x^t_j=b$ and choose $I=\{j\}$);
\item the vectors $x^t$ are all distinct.
\end{enumerate}
Such a collection exists for $d\ge 6$.

Let
$$
\mathrm{supp}(\mathbb P) := \{y^1,\dots,y^m\}\cup\{x^0,\dots,x^{2d-2}\},
\qquad n:=m+(2d-1),
$$
and let $\mathbb P$ be uniform over $\mathrm{supp}(\mathbb P)$.

\textbf{Step 3: define the action space and loss.}
Let
$$
A := \mathbb R^p\ \cup\ \{0,1,\dots,2d-2\},
\qquad
B := T+1,
\qquad
\widetilde T := \frac{T}{n}.
$$
Define $\ell:A\times\mathrm{supp}(\mathbb P)\to\mathbb R_{\ge 0}$ by:
\begin{enumerate}
\item For gadget states $x^t$,
$$
\ell(a,x^t)=
\begin{cases}
0, & \text{if } a=t,\\
B, & \text{otherwise}.
\end{cases}
$$
\item For data states $y^i$,
$$
\ell(a,y^i)=
\begin{cases}
\|a-z_i\|_2^2, & \text{if } a\in\mathbb R^p,\\
B, & \text{if } a\in\{0,1,\dots,2d-2\}.
\end{cases}
$$
\end{enumerate}

\paragraph{Completeness.}
Assume the $2$-means instance is a YES instance: there exists a partition $(C_1,C_2)$ with cost $\le T$.
Define a policy $\sigma$ (with $k=1$) by:
\begin{enumerate}
\item For each gadget state $x^t$, choose an index set realizing its designated signal $s^t\in\mathcal S_0$.
\item For each data state $y^i$, set
$$
\sigma(y^i)=
\begin{cases}
\{1\}, & \text{if } i\in C_1,\\
\{2\}, & \text{if } i\in C_2.
\end{cases}
$$
Since $y^i_1=y^i_2=0$, data states realize exactly $(\{1\},0)$ and $(\{2\},0)$.
\end{enumerate}
Then each gadget signal $s^t$ is produced only by $x^t$, so the sophisticated posterior is degenerate
and the Bayes action is $a=t$, yielding zero gadget loss. For the two data signals, the posterior is supported
on $\{y^i:i\in C_1\}$ and $\{y^i:i\in C_2\}$; under squared loss, the Bayes action is the conditional mean,
i.e., the corresponding centroid $\mu_1,\mu_2$. Hence
$$
R^S(\sigma)
=\frac{1}{n}\sum_{i=1}^m \|z_i-\mu_{\mathrm{cl}(i)}\|_2^2
\le \frac{T}{n}
=\widetilde T,
$$
so $\min_\sigma R^S(\sigma)\le \widetilde T$.

\paragraph{Soundness.}
Conversely, suppose there exists a policy $\sigma$ with $R^S(\sigma)\le \widetilde T=T/n$.

\textbf{Claim 1: no signal pools two distinct gadget states.}
If some realized signal $s$ is generated by distinct gadget states $x^t$ and $x^{t'}$ with $t\neq t'$,
then any single action $a$ incurs loss at least $B$ on at least one of $\{x^t,x^{t'}\}$, implying
overall expected loss at least $B/n > T/n$, a contradiction.

\textbf{Claim 2: no signal pools a gadget state with a data state.}
If some signal $s$ is generated by $x^t$ and $y^i$, then any action $a\in A$ incurs loss at least $B$ on at least
one of them (vector actions incur $B$ on gadgets; discrete actions incur $B$ on data), giving expected loss
at least $B/n>T/n$, a contradiction.

\textbf{Claim 3: gadget states occupy all signals in $\mathcal S_0$.}
Each gadget state has $x^t_1=x^t_2=1$, so it cannot realize $(\{1\},0)$ or $(\{2\},0)$ under $k=1$; hence it must
realize some signal in $\mathcal S_0$. By Claim 1, the $2d-1$ gadget states realize $2d-1$ distinct signals in
$\mathcal S_0$, which has size $2d-1$, so they realize all of $\mathcal S_0$.

\textbf{Data states can use only two signals.}
By Claim 2, data states cannot realize any signal in $\mathcal S_0$. Therefore every data state must realize one
of the remaining two signals $(\{1\},0)$ or $(\{2\},0)$, inducing a partition $(C_1,C_2)$ of $[m]$.
Under squared loss, the sophisticated Bayes action for each signal is the centroid of the corresponding cluster,
so the data-state contribution equals
$$
\frac{1}{n}\sum_{t=1}^2\ \sum_{i\in C_t}\|z_i-\mu_t\|_2^2.
$$
Since $R^S(\sigma)\le T/n$, multiplying by $n$ yields
$$
\sum_{t=1}^2\ \sum_{i\in C_t}\|z_i-\mu_t\|_2^2\ \le\ T,
$$
so the original $2$-means instance is a YES instance.

\paragraph{Polynomial-time reduction.}
The construction is polynomial in the input size: $d=O(\log m)$ and $n=m+O(\log m)$, and the instance is binary with
$k=1$. Therefore deciding whether $\min_\sigma R^S(\sigma)\le \widetilde T$ is NP-hard. As explained above, this implies
computing $\min_\sigma R^S(\sigma)$ (or an optimizer) is NP-hard.
\end{proof}

\begin{proof}[Proof of \autoref{lm:fixed-indep}]
If a fixed set $I$ is revealed, the naive Bayes action is $a_i=X_i$ for $i\in I$ and $a_i=p_i$ for $i\notin I$.
Thus $\mathrm{loss}(I;X)=\sum_{i\notin I}(X_i-p_i)^2$, so
\[
R^N(I)=\sum_{i\notin I}\E[(X_i-p_i)^2]=\sum_{i\notin I}\Var(X_i)=\sum_{i\notin I}p_i(1-p_i).
\]
Minimizing over $|I|\le k$ is equivalent to maximizing $\sum_{i\in I}p_i(1-p_i)$, hence choose the $k$ largest $p_i(1-p_i)$.
\end{proof}

\begin{proof}[Proof of \autoref{lm:contextual-indep}]
Fix $x$, for any $I$, $L^N(I;x)=\sum_{i\notin I}(x_i-p_i)^2 = \sum_{i=1}^d (x_i-p_i)^2-\sum_{i\in I}(x_i-p_i)^2$.
The first term is constant in $I$, so minimizing loss is equivalent to maximizing $\sum_{i\in I}(x_i-p_i)^2$ over $|I|\le k$, i.e., reveal the $k$ largest surprises.
\end{proof}

\begin{proof}[Proof of \autoref{prop:fixedasymptotic}]

Write $F^*_d, Q^*_d$ for the cdf and quantile functions for a uniform draw from the $(p^*_{d,i})_{i=1}^d$.
By \autoref{asm:stability}, $F^*_d(t) \rightarrow F^*(t)$ for all continuity points of $F^*$.
Furthermore, the same holds for the quantile functions $Q^*_d$ and their limit $Q^*$.

Now let $U \sim \text{Uniform}([0,1])$.
Then $Q^*_d(U), Q^*(U)$ are distributed according to $F^*_d, F^*$, respectively.
In particular, for the normalized risk of a fixed subset faced by either the sophisticated or naive agent, we have that, as $d \rightarrow \infty$,
\begin{align*}
    &\E\left[
        \frac{1}{d} \sum_{j=1}^{d - k}
        (X^*_{d,(j)} - p^*_{d,(j)})^2
    \right]
    =
    \frac{1}{d} \sum_{j=1}^d \Ind(j \leq d - k)
        \: p^*_{d,(j)} (1 - p^*_{d,(j)})
    \\
    &=
    \E[
        \Ind(U \leq \frac{d-k}{d})
        Q^*_d(U) (1 - Q^*_d(U))
    ]
    =
    \E[
        \Ind(U \leq 1 - \alpha)
        Q^*_d(U) (1 - Q^*_d(U))
    ]
    + o(1)
    \\
    &\rightarrow
    \E[
        \Ind(U \leq 1 - \alpha)
        Q^*(U) (1 - Q^*(U))
    ]
    =
    \int_{q = 0}^{1-\alpha}
    Q^*(q) (1 - Q^*(q)) \de q.
    \qedhere
\end{align*}
\end{proof}

\begin{proof}[Proof of \autoref{prop:fractionasymptotic}]

We use the same argument and notation as in the proof of \autoref{prop:fixedasymptotic}.

\paragraph{Sophisticated agent.}
Under fixed surprise highlighting, the sophisticated agent learns the value of the first $\lfloor \beta d \rfloor$ features, but not anything about the remaining ones. Hence, the expected normalized loss equals
\begin{align*}
    &\E\left[
        \frac{1}{d} \sum_{j=\lfloor \beta d \rfloor + 1}^{d}
        (X^*_{d,(j)} - p^*_{d,(j)})^2
    \right]
    =
    \frac{1}{d} \sum_{j=1}^{d} \Ind(j > \lfloor \beta d \rfloor)\: p^*_{d,(j)} (1 - p^*_{d,(j)})
    \\
    &=
    \E\left[\Ind\!\left(U > \frac{\lfloor \beta d \rfloor}{d}\right)\, Q^*_d(U)\bigl(1-Q^*_d(U)\bigr)\right]
    =
    \E\left[\Ind(U > \beta)\, Q^*_d(U)\bigl(1-Q^*_d(U)\bigr)\right] + o(1)
    \\
    &\rightarrow
    \E\left[\Ind(U > \beta)\, Q^*(U)\bigl(1-Q^*(U)\bigr)\right]
    =
    \int_{q=\beta}^{1} Q^*(q)\bigl(1-Q^*(q)\bigr)\,\de q.
\end{align*}

\paragraph{Naive agent.}
For $j\le \lfloor \beta d\rfloor$, the feature is highlighted iff $X^*_{d,(j)}=1$, so the naive agent suffers loss on coordinate $j$ iff $X^*_{d,(j)}=0$, in which case the squared error equals $(0-p^*_{d,(j)})^2=(p^*_{d,(j)})^2$.
Thus, the naive agent has \emph{additional} expected loss
\begin{align*}
    &\E\left[
        \frac{1}{d}
        \sum_{j= 1}^{\lfloor \beta d \rfloor}
        \Ind(X^*_{d,(j)} {=} 0)
        (X^*_{d,(j)} - p^*_{d,(j)})^2
    \right]
    =
    \frac{1}{d}\sum_{j=1}^{d}\Ind(j\le \lfloor \beta d\rfloor)\: (p^*_{d,(j)})^2\,(1-p^*_{d,(j)})
    \\
    &=
    \E\left[\Ind\!\left(U \le \frac{\lfloor \beta d \rfloor}{d}\right)\, \bigl(Q^*_d(U)\bigr)^2\bigl(1-Q^*_d(U)\bigr)\right]
    =
    \E\left[\Ind(U \le \beta)\, \bigl(Q^*_d(U)\bigr)^2\bigl(1-Q^*_d(U)\bigr)\right] + o(1)
    \\
    &\rightarrow
    \E\left[\Ind(U \le \beta)\, \bigl(Q^*(U)\bigr)^2\bigl(1-Q^*(U)\bigr)\right]
    =
    \int_{q=0}^{\beta} \bigl(Q^*(q)\bigr)^2\bigl(1-Q^*(q)\bigr)\,\de q.
\end{align*}
Moreover, for $j\ge \lfloor \beta d\rfloor + 1$ the procedure never highlights, so the expected loss on those coordinates is the same as for the sophisticated agent computed above.
Combining the two parts yields
\begin{align*}
    &\E\left[
        \frac{1}{d}
        \sum_{j= 1}^{\lfloor \beta d \rfloor}
        \Ind(X^*_{d,(j)} {=} 0)
        (X^*_{d,(j)} - p^*_{d,(j)})^2
        +
        \frac{1}{d}
        \sum_{j=\lfloor \beta d \rfloor + 1}^{d}
        (X^*_{d,(j)} - p^*_{d,(j)})^2
    \right]
    \\
    &\rightarrow
    \int_{q=0}^{\beta} \bigl(Q^*(q)\bigr)^2\bigl(1-Q^*(q)\bigr)\,\de q
    +
    \int_{q=\beta}^{1} Q^*(q)\bigl(1-Q^*(q)\bigr)\,\de q.
    \qedhere
\end{align*}
\end{proof}

\begin{proof}[Proof of \autoref{prop:equivalence}]
Fix \(\beta\in(0,1)\).
Define the (random) fraction of features revealed by the fixed surprise-highlighting \autoref{enum:fraction}, which are the ones among the first \(\lfloor \beta d\rfloor\) coordinates, as
\[
\hat{\alpha}_d(\beta)
:=
\frac{1}{d}\sum_{j=1}^{\lfloor \beta d\rfloor} X^*_{d,(j)}.
\]
Given the (non-random) vector \((p^*_{d,(1)},\ldots,p^*_{d,(d)})\), the variables \(X^*_{d,(j)}\) are independent Bernoulli with means \(p^*_{d,(j)}\).
Hence
\[
\E\bigl[\hat{\alpha}_d(\beta)\bigr]
=
\frac{1}{d}\sum_{j=1}^{\lfloor \beta d\rfloor} p^*_{d,(j)},
\qquad
\Var\bigl(\hat{\alpha}_d(\beta)\bigr)
=
\frac{1}{d^2}\sum_{j=1}^{\lfloor \beta d\rfloor} p^*_{d,(j)}\bigl(1-p^*_{d,(j)}\bigr)
\le
\frac{\lfloor \beta d\rfloor}{4d^2}
\le
\frac{1}{4d}.
\]
Furthermore, as in the proof of \autoref{prop:fixedasymptotic},
\begin{align*}
    &\E\bigl[\hat{\alpha}_d(\beta)\bigr]
    =
    \frac{1}{d}\sum_{j=1}^{d} \Ind(j\le \lfloor \beta d\rfloor)\: p^*_{d,(j)}
    \\
    &=
    \E\Bigl[\Ind\!\left(U\le \frac{\lfloor \beta d\rfloor}{d}\right)\,Q^*_d(U)\Bigr]
    =
    \E\bigl[\Ind(U\le \beta)\,Q^*_d(U)\bigr] + o(1)
    \\
    &\rightarrow
    \E\bigl[\Ind(U\le \beta)\,Q^*(U)\bigr]
    =
    \int_{0}^{\beta} Q^*(q)\,\de q
    =:
    \overline{\alpha}(\beta).
\end{align*}
As a consequence,
\begin{equation}
\label{eqn:alphaconvergence}
    \hat{\alpha}_d(\beta) \xrightarrow{p} \overline{\alpha}(\beta)
\end{equation}
as \(d \rightarrow \infty\).

The function \(\overline{\alpha}(\beta)\) is continuous in \(\beta\).
By \autoref{asm:smallbandwidth}, there exists a \(\beta^* \in (0,1)\) such that \(\overline{\alpha}(\beta^*) = \alpha\), and \(\overline{\alpha}(\beta)\) is strictly increasing in a neighborhood of \(\beta^*\).

Define now \(\hat{\beta}_d\) as the (random) normalized highest index revealed by the greedy rule~\autoref{enum:greedy},
\[
\hat{\beta}_d
=
\frac{1}{d}
\min\left(\left\{j \in \{1,\ldots,d\}: \sum_{j'=1}^j X^*_{d,(j')} = k \right\} \cup \{d\}\right),
\]
where we choose \(d\) if all ones are revealed.
Then the greedy algorithm is equivalent to the algorithm that highlights all ones among a fraction \(\hat{\beta}_d\) of cases.

For a sufficiently small \(\varepsilon > 0\), consider now fixed-highlighting algorithms with fractions \(\beta^* - \varepsilon\), \(\beta^* + \varepsilon\).
By \autoref{eqn:alphaconvergence}, we have that
\begin{align*}
    \hat{\alpha}_d(\beta^* - \varepsilon) &\xrightarrow{p} \overline{\alpha}(\beta^* - \varepsilon),
    &
    \hat{\alpha}_d(\beta^* + \varepsilon) &\xrightarrow{p} \overline{\alpha}(\beta^* + \varepsilon).
\end{align*}
Since also \(\overline{\alpha}(\beta^* - \varepsilon) < \alpha < \overline{\alpha}(\beta^* + \varepsilon)\) for \(\varepsilon\) small enough,
\[
\P\bigl(\hat{\alpha}(\beta^* - \varepsilon) \le \alpha \le \hat{\alpha}(\beta^* + \varepsilon)\bigr) \to 1.
\]
Since \(\beta^* - \varepsilon \le \hat{\beta}_d \le \beta^* + \varepsilon\) if and only if \(\hat{\alpha}_d(\beta^* - \varepsilon) \le \alpha \le \hat{\alpha}_d(\beta^* + \varepsilon)\),
\[
\P\bigl(\beta^* - \varepsilon \le \hat{\beta}_d \le \beta^* + \varepsilon\bigr) \to 1.
\]
Hence,
\begin{equation}
    \hat{\beta}_d \xrightarrow{p} \beta^*.
\end{equation}
As a direct consequence, normalized expected risks are equivalent.
\end{proof}

\begin{proof}[Proof of \autoref{thm:greedyasymptotic}]
    The result follows from combining \autoref{prop:equivalence} with \autoref{prop:fractionasymptotic} (evaluated at \(\beta=\beta^*\)).
\end{proof}

\begin{proof}[Proof of \autoref{prop:dev-weak-corr-general}]
Fix an instance $x$ and a revealed set $I$ with $|I|\le k$.
Under squared loss, the naive Bayes action predicts each unrevealed coordinate $i\notin I$ by
$m_i(I,x_I):=\E[X_i\mid X_I=x_I]$.
Define the realized loss
$$
L^N(I;x):=\sum_{i\notin I}(x_i-m_i(I,x_I))^2
\qquad\text{and the ``surrogate''}\qquad
S(I;x):=\sum_{i\notin I}(x_i-\mu_i)^2.
$$
By \autoref{asm:weak-corr-mean}, for each $i\notin I$ we have $|m_i(I,x_I)-\mu_i|\le\varepsilon$.
Let $\delta_i:=m_i(I,x_I)-\mu_i$ so $|\delta_i|\le\varepsilon$.
Since $x_i\in[a_i,b_i]$ and $\mu_i\in[a_i,b_i]$, we have $|x_i-\mu_i|\le R$.
Expanding,
$$
(x_i-m_i)^2=(x_i-\mu_i-\delta_i)^2=(x_i-\mu_i)^2-2\delta_i(x_i-\mu_i)+\delta_i^2,
$$
so
$$
\big|(x_i-m_i)^2-(x_i-\mu_i)^2\big|\le 2|\delta_i|\,|x_i-\mu_i|+\delta_i^2\le 2R\varepsilon+\varepsilon^2.
$$
Summing over at most $d-k$ unrevealed coordinates gives a uniform perturbation bound:
$$
|L^N(I;x)-S(I;x)|\le (d-k)(2R\varepsilon+\varepsilon^2)=:C.
$$
\autoref{alg:dev} minimizes $S(I;x)$ over $|I|\le k$ for each $x$.
Using $L^N\le S+C$ and $S\le L^N+C$ to compare the minimizers of $S$ and $L^N$ yields
$L^N(I^{\mathrm{dev}}(x);x)\le L^N(I^\star(x);x)+2C$ pointwise.
Taking expectation over $X$ proves the proposition.
\end{proof}
\begin{proof}[Proof of \autoref{prop:minimax-private}]
We first establish the following fact: for any square-integrable random vector $Y$ and any two $\sigma$-algebras with $\mathcal G\subseteq\mathcal G'$,
\begin{equation}\label{eq:l2-proj}
\E\bigl\|Y-\E[Y\mid \mathcal G']\bigr\|_2^2\ \le\ \E\bigl\|Y-\E[Y\mid \mathcal G]\bigr\|_2^2,
\end{equation}
with equality iff $\E[Y\mid\mathcal G']=\E[Y\mid\mathcal G]$ almost surely.
Since $\mathcal G\subseteq\mathcal G'$, the tower property gives $\E[Y\mid\mathcal G]=\E\!\left[\E[Y\mid\mathcal G']\,\big|\,\mathcal G\right]$, so we can decompose the coarser error as
\[
Y-\E[Y\mid\mathcal G]\ =\ \underbrace{\bigl(Y-\E[Y\mid\mathcal G']\bigr)}_{=:A}\ +\ \underbrace{\bigl(\E[Y\mid\mathcal G']-\E[Y\mid\mathcal G]\bigr)}_{=:B}.
\]
The two terms are orthogonal in $L^2$: $B$ is $\mathcal G'$-measurable, and $\E[A\mid\mathcal G']=\E[Y\mid\mathcal G']-\E[Y\mid\mathcal G']=0$, so $\E\!\left[A^\top B\right]=\E\!\left[\E[A\mid\mathcal G']^\top B\right]=0$.
Pythagoras' identity then yields
\[
\E\bigl\|Y-\E[Y\mid\mathcal G]\bigr\|_2^2\ =\ \E\bigl\|Y-\E[Y\mid\mathcal G']\bigr\|_2^2\ +\ \E\bigl\|\E[Y\mid\mathcal G']-\E[Y\mid\mathcal G]\bigr\|_2^2.
\]
The last term is non-negative, which gives \eqref{eq:l2-proj}, and it vanishes iff $\E[Y\mid\mathcal G']=\E[Y\mid\mathcal G]$ almost surely, which gives the equality condition.

For a fixed highlighting policy, write $I=\sigma(X_M)$ for the selected index set; both $I$ and the (index-tagged) revealed coordinates $X_I$ are well-defined random variables, and we write $\Sigma(Z)$ for the $\sigma$-algebra generated by a random variable $Z$ (distinct from the highlighting map $\sigma$).
Because $X_I$ carries its index tag, $\Sigma(X_I)=\Sigma(I,X_I)$, and the sophisticated agent's information under each private information structure is
\[
\mathcal G^S_{\emptyset}\ :=\ \Sigma(I,X_I)\ =\ \Sigma(X_I),
\qquad
\mathcal G^S_{H}\ :=\ \Sigma(I,X_I,X_H)\ =\ \Sigma(X_I,X_H).
\]
These satisfy the nesting $\mathcal G^S_{\emptyset}\subseteq\mathcal G^S_{H}$ because adjoining the additional generator $X_H$ can only enlarge the generated $\sigma$-algebra.
Because the sophisticated posterior is $\hat{\P}^S(\cdot\mid X_H,I,X_I)=\P(\cdot\mid X_H,X_I,\sigma(X_M)=I)$, the sophisticated Bayes action for $X_M$ under squared loss is the corresponding conditional expectation. 

Applying \eqref{eq:l2-proj} with $Y=X_M$, $\mathcal G=\mathcal G^S_{\emptyset}$, and $\mathcal G'=\mathcal G^S_{H}$ yields
\[
R^S(\sigma;H)\ =\ \E\bigl\|X_M-\E[X_M\mid\mathcal G^S_{H}]\bigr\|_2^2\ \le\ \E\bigl\|X_M-\E[X_M\mid\mathcal G^S_{\emptyset}]\bigr\|_2^2\ =\ R^S(\sigma;\emptyset).
\]
Equality holds when $\E[X_M\mid\mathcal G^S_H]=\E[X_M\mid\mathcal G^S_\emptyset]$ a.s., e.g.\ when $X_H$ is conditionally independent of $X_M$ given $\mathcal G^S_\emptyset=\Sigma(I,X_I)$.
The upper bound $R^S(\sigma;\emptyset)$ does not depend on $H$, so $\sup_{H\in\mathcal H}R^S(\sigma;H)=R^S(\sigma;\emptyset)$ for every highlighting policy, and the outer minimizations coincide.
\end{proof}
\end{document}